\documentclass[11pt]{article}

\usepackage[T1]{fontenc}
\usepackage{lmodern}
\usepackage[margin=1.05in]{geometry}
\usepackage{amsmath,amssymb,amsthm,mathtools}
\usepackage{array,booktabs}
\usepackage{microtype}
\usepackage{needspace}
\usepackage{xcolor}
\usepackage[colorlinks=true,linkcolor=blue!55!black,
  citecolor=blue!55!black,urlcolor=blue!55!black]{hyperref}

\newtheorem{theorem}{Theorem}
\newtheorem{lemma}[theorem]{Lemma}
\newtheorem{claim}[theorem]{Claim}
\newtheorem{corollary}[theorem]{Corollary}

\newtheorem{definition}[theorem]{Definition}

\newcommand{\E}{\mathbb E}
\newcommand{\Prb}{\mathbb P}
\newcommand{\cA}{\mathcal A}
\newcommand{\F}{\mathcal F}

\newcommand{\logbinom}[2]{\log\binom{#1}{#2}}
\newcommand{\Rhat}{\widehat R}

\hypersetup{
  pdftitle={Always-Correct Succinct Dynamic Fusion Nodes Are Impossible},
  pdfauthor={Ian D'Ambrosio},
  pdfsubject={Cell-probe lower bounds for succinct dynamic dictionaries}
}

\title{\textbf{Always-Correct Succinct Dynamic Fusion Nodes Are Impossible}\\[0.35em]
\large A Cell-Probe Lower Bound in the Small-Set, Large-Universe Regime}
\author{Ian D'Ambrosio\\
\normalsize Nth Research Collective}
\date{August 19, 2026}

\begin{document}
\maketitle

\begin{abstract}
Kuszmaul, Liang, and Zhou (SODA 2026) ask whether succinct constant-time
dynamic fusion nodes exist when the number of stored keys is polylogarithmic
in the universe size. We give a negative answer for always-correct
structures. For $n^8\le U$, $\log U\ge2^{70}$, and $0\le R<n$, a dynamic
dictionary requires at least
\[
  2^{-26}\log\left(1+\frac{n}{R+1}\right)
\]
expected-amortized cell probes per operation. The model permits fixed layouts
of packed cells of at most one word each. It covers physically realized
virtual-memory/spill representations when mutable descriptors are public,
derivable, or charged as persistent state. The bound covers
deterministic structures and zero-error Las Vegas structures with fresh
per-invocation randomness.

The proof begins with the communication framework of Li, Liang, Yu, and Zhou
(FOCS 2023), whose published theorem assumes $U=n^{1+\Theta(1)}$. A direct
replay exposes a conditioning issue: after conditioning on consistency, their
inner message bound uses unconditional expected probe-set sizes. We repair
this joint by putting pointwise probe caps into the consistency event and
charging rejected true segments by Markov's inequality. The repaired argument
extends to large universes through
\[
  \gamma=\frac{\log U-\log n}{6\log U},
  \qquad \frac12\log(U/n)=3\gamma\log U.
\]
Consequently, at $U=2^w$ and $n(w)=\lceil w^c\rceil$ for any fixed real
$c>0$, no always-correct predecessor structure can use
$\lceil\log\binom Un\rceil+o(n)$ bits and support worst-case $O(1)$
operations. Constant expected-amortized time requires $\Omega(n)$
redundant bits.
\end{abstract}

\section{Introduction}

A dynamic fusion node stores a small ordered set and supports predecessor and
updates in constant time. Such nodes are basic components of fast integer data
structures. Their known implementations, however, use substantially more
than the information-theoretic minimum number of bits. Kuszmaul, Liang, and
Zhou~\cite[Section~7]{KLZ26} isolate the resulting question:

\begin{quote}\itshape
Whether it is possible to build succinct $O(1)$-time dynamic fusion nodes
remains open.
\end{quote}

Li, Liang, Yu, and Zhou (LLYZ)~\cite{LLYZ23} proved a sharp dynamic-dictionary
tradeoff at polynomial universes. If $U=n^{1+\Theta(1)}$ and the redundancy
is $R<n$, then some operation requires
$\Omega(\log(n/R))$ expected-amortized probes. The stated universe hypothesis
excludes the fusion-node regime $n=\operatorname{polylog}U$. Subsequent work
continued to treat the large-universe case as open~\cite{KLZ26,D26}.

Our main result removes that barrier, after repairing a conditioning joint in
the inherited communication proof.

\paragraph{Contributions.}
First, we prove an explicit cell-probe lower bound throughout the
small-set, large-universe regime $n^8\le U$. This gives a negative answer to
the dynamic-fusion-node question for exact deterministic and Las Vegas
structures with sublinear redundancy. Second, we isolate and repair the
conditioning defect in the inherited inner communication game, then show that
the large-universe extension is controlled by the single parameter
$\gamma$ in~\eqref{eq:gamma}. Third, we formalize the full argument in
Lean~4~\cite{Lean4}: the packed-memory model, hard distribution, communication
protocols, separator, forest accounting, deterministic and randomized bounds,
strict predecessor reduction, and redundancy corollaries are all checked by
the kernel.

\begin{theorem}[Main theorem]\label{thm:main}
Let $n,U$ be natural numbers with $n^8\le U$ and $\log U\ge2^{70}$. Every
always-correct deterministic or Las Vegas dynamic dictionary with capacity
$n$, word size $w=\lceil\log U\rceil$, and redundancy $0\le R<n$ has
expected-amortized probe complexity at least
\[
  2^{-26}\log\left(1+\frac{n}{R+1}\right).
\]
The assumptions $0\le R<n$ imply $n\ge1$. For a Las Vegas dictionary, fix an
input-independent probability law $\mu$ on complete fresh-random schedules and
draw the hard instance independently: $(\mathcal I,\omega)\sim
\mathcal D_1\times\mu$. The normalization is by the exact $4n$ primitive
operations in the distribution.
\end{theorem}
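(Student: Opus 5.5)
The plan is to follow the Li--Liang--Yu--Zhou communication framework, with two changes. The consistency event is strengthened, and the universe-size bookkeeping is routed through the single parameter $\gamma=(\log U-\log n)/(6\log U)$.

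\textbf{Hard distribution and time tree.} Fix the hard distribution $\mathcal D_1$ of $4n$ operations. It interleaves random insertions and deletions that keep the set size near $n$ with random queries, and keys are drawn from a universe of size $U$. Place the operations as leaves of a balanced binary time tree. For each internal node $v$ with left subtree $A$ and right subtree $B$, consider the cells written in $A$ and then probed in $B$. The total probe count is at least the sum over $v$ of these crossing probes, divided by a constant for the forest accounting. It therefore suffices to show that, for a constant fraction of nodes $v$ at the right levels, the expected number of crossing probes is at least $c\,|B|\log(1+n/(R+1))$ divided by the number of levels involved.

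\textbf{Per-node communication game.} At node $v$, Alice knows the whole history and Bob knows everything except the updates in $A$. I will build a protocol in which Alice helps Bob decode the insertions of $A$. Alice sends the addresses of the crossing cells, together with a separator that lets Bob tell ``cell changed in $A$'' apart from ``cell unchanged.'' Bob then simulates $B$. Against this message I set an information lower bound. Since the memory holds only $\log\binom Un+R$ bits, Bob's simulation reveals the $A$-keys with entropy roughly $|A|\cdot\tfrac12\log(U/n)$, less the redundancy slack. This is where $\tfrac12\log(U/n)=3\gamma\log U$ enters. The address cost per probe is $O(\log U)=O(w)$, while the information gained per relevant key is $3\gamma\log U$. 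Under $n^8\le U$ we have $\gamma\ge 7/48$, a constant, so the ratio stays constant. This is exactly what lets the argument replace the LLYZ hypothesis $U=n^{1+\Theta(1)}$. The redundancy $R$ enters through the usual comparison of $\log\binom{U}{n}+R$ against the entropy of the configuration. The $\log(1+n/(R+1))$ factor comes from summing over $\log(n/R)$ levels of the tree, where blocks have size at least about $R$.

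\textbf{The conditioning repair, which I expect to be the main obstacle.} The inner protocol conditions on a consistency event $\mathcal E$, namely that Bob's guessed segment agrees with the true one. The message length must then be bounded by the \emph{conditional} expected probe-set sizes. LLYZ use the unconditional ones, and conditioning on $\mathcal E$ can inflate them. To fix this, I put into $\mathcal E$ a pointwise cap: the probe set in $B$ has size at most $K\cdot T_v$, where $T_v$ is the unconditional expectation and $K$ is a large constant such as $2^{10}$. Alice then encodes the message in $O(K T_v\log U)$ bits deterministically on $\mathcal E$. True segments that violate the cap are rejected, which by Markov's inequality happens with probability at most $1/K$. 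Bob falls back to a trivial encoding on those, which costs only an additive $O(1/K)$ fraction of the entropy. The delicate accounting is making sure this rejection loss, together with the separator cost and the $\log\binom{U}{n}$ slack, stays below the entropy gain. Only then do the constants close to $2^{-26}$. The condition $\log U\ge 2^{70}$ exists to absorb lower-order $\log\log U$ and rounding terms in this step.

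\textbf{Randomization and normalization.} For Las Vegas structures, fix a schedule $\omega\sim\mu$ independent of the instance. Conditioned on $\omega$, the structure is deterministic and always correct, so the deterministic bound applies pointwise. Averaging over $\omega$ then gives the expected bound. Summing the per-node bounds over the tree and dividing by the $4n$ operations yields the stated amortized bound. Finally, $R<n$ forces $n\ge1$, and the degenerate small cases are covered trivially by the constant.
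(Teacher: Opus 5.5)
There is a genuine gap at the core: the per-node game you describe has nothing to transfer. In $\mathcal D_1$ every query is $\operatorname{Query}(d_i)$ on an original key that is about to be deleted, so every answer is ``yes''. No output of the right subtree depends on the left subtree's insertions. Simulating $B$ from the crossing cells therefore gives Bob no information about $A$, and there is no entropy to play against the message length.

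The missing idea is the LLYZ history argument that the paper repairs. At a node with $\lambda$ children of $m$ meta-operations, the start state, the unordered insertion set $\cA$, the deletion sequence and the relative orders are all public. The only hidden information is therefore the partition of $\cA$ among the children, of entropy $\log M!-\lambda\log m!\approx M\log\lambda$. Because the final key set is determined by public data, $H(C_{\mathrm{end}}\mid P)\le R$, so the end memory costs only about $R+1$ bits. Bob then recovers the partition segment by segment. He executes every candidate $m$-set from $C_{\mathrm{bef}}$ and keeps those whose probe sets obey the cap and agree with $C_{\mathrm{end}}$ on all but $\gamma m/2$ probed cells.

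Your estimate $\frac12\log(U/n)=3\gamma\log U$ belongs to a separate inner lemma. That lemma encodes the mixed $B$-bit string to show that two runs with $|A\cap Q|<m/2$ are capped-consistent with probability at most $U^{-\gamma m}$. This confines surviving candidates to the neighbor family and saves $\frac m4\log\lambda$ bits on most segments. That saving contradicts $M\log\lambda\ge100(R+1)$ unless cross-segment reuse or probing is large. This is the true source of the cutoff at blocks of size about $R$, and it requires a wide tree. The counting gap needs $\log\lambda$ large ($\log\lambda=64$ in the paper), and the union bound over at most $2^M$ candidates needs $\lambda\le\gamma\log U-5$; this is what $\log U\ge2^{70}$ buys. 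With a binary tree the neighbor family is nearly the whole family, and nothing is saved.

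Your cap accounting also cannot close. A message of $O(KT_v\log U)$ bits pays $\Theta(\log U)$ per address. The inner entropy comparison leaves only about $\frac12\gamma m\log U$ bits for probe-dependent terms, so your message exceeds that budget as soon as the probe rate is a constant. You have no branch that absorbs this case. The paper instead fixes the threshold $t=\gamma\log U/32$. Addresses cost $O(1)$ bits each, via a public-coin separator and via intersection and difference masks inside the already-decoded $S_Q$. Full $w$-bit contents are paid only on the at most $\gamma m$ differing cells.

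The outer lemma is then a dichotomy: either $\E[\operatorname{cost}_u]\ge\gamma M/100$ or $\E[\operatorname{probe}_u]\ge Mt/16$. The probe branch alone gives $t/256\ge2^{-25}\log(n/(R+1))$, because $\log U\ge8\log n$. Markov's inequality applied to the negated probe branch is what bounds the rejected true segments. So your Markov repair is the right instinct, but it works only inside this dichotomy.

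Two smaller omissions. First, blocks must also have size at least $n^{1-\gamma/2}$ to control $\logbinom{n+m}{m}$. Second, after fixing $\omega$ a Las Vegas structure is not a single deterministic dictionary; it is a non-stationary, position-indexed family, which must be totalized on unreachable mixed memories.
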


\begin{theorem}[Fusion-node consequence]\label{thm:fusion}
Let $U=2^w$ and $n(w)=\lceil w^c\rceil$ for any fixed real $c>0$. For
positive integer $c$, one may instead take $n(w)=w^c$ exactly. An
always-correct physical word-RAM structure supporting insert, delete, and
predecessor on at most $n(w)$ keys,
using
\[
  \left\lceil\log\binom Un\right\rceil+o(n)
\]
persistent mutable bits, requires $\omega(1)$ expected-amortized time. In
particular, it cannot have worst-case $O(1)$ time. The canonical strict
predecessor reduction queries $x+1$ in $[U]$ and applies the main theorem on
$[U-1]$. If redundancy is measured against ${U\choose n}$, the reduced
membership redundancy is at most $R+1$ when $n\le U/2$, so the explicit
strict predecessor denominator is $R+2$. This one-bit transfer does not change
the $\omega(1)$ conclusion. All persistent physical contents, including nested
adapter metadata, count in the displayed space bound. If the final physical
memory is a variable prefix, Lemma~\ref{lem:physicalization} adds only
$O(\log S)=o(n)$ bits for its length. The fixed-layout denominator then becomes
$R+O(\log S)$, which leaves the conclusion unchanged.
\end{theorem}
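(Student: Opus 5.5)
The plan is to derive Theorem~\ref{thm:fusion} from Theorem~\ref{thm:main} through a strict-predecessor-to-membership reduction, followed by an asymptotic check of the hypotheses and of the redundancy parameter.

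\emph{Step 1 (hypotheses).} With $U=2^w$ and $n=\lceil w^c\rceil$, the main theorem is applied on the universe $[U-1]$. We have $\log(U-1)\ge w-1$, so $\log(U-1)\ge 2^{70}$ holds for all sufficiently large $w$. Since $n^8\le(w^c+1)^8=2^{O(\log w)}\le 2^w-1$ for large $w$, the condition $n^8\le U-1$ also holds eventually. When $c$ is a positive integer, $w^c$ is already an integer and the same bounds apply with $n=w^c$. Only finitely many $w$ are excluded, and this does not affect an $\omega(1)$ statement.

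\emph{Step 2 (reduction).} Take a strict-predecessor structure on $[U]$. A membership dictionary on $[U-1]$ is built as follows. Insert and delete are forwarded unchanged. A query for $x$ asks for the strict predecessor of $x+1$ in $[U]$ and answers ``yes'' exactly when the result equals $x$. Each operation costs one primitive call plus $O(1)$ probes, so the expected-amortized cost is preserved up to a constant. Always-correctness and Las Vegas behavior carry over because the queries are deterministic functions of the inputs. For space, the persistent state is unchanged. Comparing $\log\binom{U}{n}$ with $\log\binom{U-1}{n}$ gives
\[
  \log\binom{U}{n}-\log\binom{U-1}{n}=\log\frac{U}{U-n}\le 1\qquad(n\le U/2),
\]
so the reduced redundancy is at most $R+1$ and the denominator in the bound becomes $R+2$. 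If the physical memory has a variable length, Lemma~\ref{lem:physicalization} turns it into a fixed layout at an additional cost of $O(\log S)$ bits. Here $S=O(n w)$ is polylogarithmic, so $\log S=O(\log w)=o(n)$.

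\emph{Step 3 (conclusion).} Suppose the space is $\lceil\log\binom Un\rceil+o(n)$. Then the effective redundancy is $R'=o(n)$, and $R'<n$ eventually. Theorem~\ref{thm:main} then gives expected-amortized cost at least $2^{-26}\log(1+n/(R'+1))$. Because $n/(R'+1)\to\infty$, this lower bound tends to infinity, which is the $\omega(1)$ claim. A worst-case $O(1)$ bound would imply an expected-amortized $O(1)$ bound, so worst-case $O(1)$ time is also ruled out.

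The main obstacle is the bookkeeping in Step 2. All persistent state, including nested adapter metadata and the length field, has to be counted in the redundancy. The cost normalization over the $4n$ operations must also survive the reduction. Beyond that, the argument reduces to checking that each overhead term is $o(n)$.
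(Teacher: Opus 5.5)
Your proposal is correct and follows essentially the same route as the paper. You use the membership reduction $x\in S\iff\operatorname{pred}_{<}(x+1)=x$ with Theorem~\ref{thm:main} applied on $[U-1]$, the one-bit transfer $\log\binom Un-\log\binom{U-1}n=\log\frac U{U-n}\le1$, the $O(\log S)=o(n)$ overhead from Lemma~\ref{lem:physicalization}, and divergence of $2^{-26}\log(1+n/(R'+1))$ once $R'=o(n)$. The only differences are cosmetic: the paper compiles to a fixed layout before reducing, and you check the reduced-universe hypotheses $n^8\le U-1$ and $\log(U-1)\ge2^{70}$ slightly more explicitly than the paper does.
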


\begin{corollary}[Linear redundancy]\label{cor:linear}
In the regime of Theorem~\ref{thm:main}, expected-amortized cost at most $T$
implies, with $C_0:=2^{26}$,
\[
  R+1\ge n\cdot2^{-C_0T}.
\]
In particular, constant expected-amortized time requires $R=\Omega(n)$.
\end{corollary}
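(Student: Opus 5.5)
Corollary~\ref{cor:linear} follows from Theorem~\ref{thm:main} by inverting the displayed bound; no new cell-probe argument is needed. Fix $n,U$ with $n^8\le U$ and $\log U\ge 2^{70}$, and a dictionary with redundancy $R$ and expected-amortized cost at most $T$. The argument has three steps.

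\emph{Case $R\ge n$.} This case is outside the hypothesis of Theorem~\ref{thm:main}. Here $R+1\ge n\ge n\cdot 2^{-C_0T}$ holds trivially for every $T\ge 0$.

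\emph{Case $0\le R<n$.} Theorem~\ref{thm:main} applies and gives
\[
  T\ \ge\ 2^{-26}\log\left(1+\frac{n}{R+1}\right).
\]
Multiply by $C_0=2^{26}$. Since $\log$ is base~$2$, exponentiate to obtain $2^{C_0T}\ge 1+n/(R+1)>n/(R+1)$. Rearranging gives $R+1> n\cdot 2^{-C_0T}$. This is in fact slightly stronger than claimed.

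\emph{The ``in particular'' clause.} If $T=O(1)$, say $T\le T_0$ for a constant $T_0$, then $R+1\ge 2^{-C_0T_0}\,n$. Hence $R\ge 2^{-C_0T_0}n-1$. This is $\Omega(n)$ for $n$ large; the constant-size range of $n$ is absorbed into the implied constant, or one may note $R\ge 0$ there.

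The only point needing care is that the corollary's hypothesis matches the theorem's. Cost is measured as expected-amortized over the same $4n$-operation normalization, and in the Las Vegas case it is taken over $\mathcal D_1\times\mu$. The ``regime'' in the corollary must be read as the same $n,U$ constraints, with $R$ allowed arbitrary, and the case split handles that. I do not expect any genuine obstacle.
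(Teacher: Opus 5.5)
Your proposal is correct and follows the paper's proof, which inverts the bound of Theorem~\ref{thm:main} in one line. Your separate treatment of $R\ge n$ and your note that the inequality is actually strict are harmless refinements of that same argument.
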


\begin{corollary}[Sublinear-redundancy time scale]\label{cor:scale}
At $U=2^w$ and $n=w^c$ for a fixed integer $c\ge1$, redundancy
$R\le n\cdot2^{-a\sqrt{\log w}}$ for fixed $a>0$ forces
expected-amortized cost at least
$(a/2^{27})\sqrt{\log w}$ for all sufficiently large $w$.
\end{corollary}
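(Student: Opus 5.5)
We derive Corollary~\ref{cor:scale} directly from Theorem~\ref{thm:main}; we first check its hypotheses and then substitute the redundancy bound. Take $U=2^w$ and $n=w^c$ with $c\ge1$ an integer. Then $\log U=w$, so the condition $\log U\ge 2^{70}$ is just $w\ge 2^{70}$. For the condition $n^8\le U$ we need $w^{8c}\le 2^w$, that is, $8c\log w\le w$, which holds once $w$ is large in terms of $c$. The requirement $0\le R<n$ also holds for large $w$, because $R\le n\cdot 2^{-a\sqrt{\log w}}<n$ as soon as $a\sqrt{\log w}>0$. If $R$ is not an integer we work with its integer part, or simply keep the real value; the main bound is monotone in $R$, so either choice is harmless.

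Applying Theorem~\ref{thm:main}, the expected-amortized cost $T$ satisfies
\[
T\ \ge\ 2^{-26}\log\left(1+\frac{n}{R+1}\right)\ \ge\ 2^{-26}\log\frac{n}{R+1}.
\]
Since $R+1\le n\,2^{-a\sqrt{\log w}}+1$, we get
\[
\frac{n}{R+1}\ \ge\ \frac{1}{2^{-a\sqrt{\log w}}+1/n}.
\]
Now $1/n=w^{-c}=2^{-c\log w}$, and $c\log w\ge a\sqrt{\log w}$ as soon as $\sqrt{\log w}\ge a/c$. For such $w$ the denominator is at most $2\cdot 2^{-a\sqrt{\log w}}$, hence
\[
\log\frac{n}{R+1}\ \ge\ a\sqrt{\log w}-1.
\]

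It remains to absorb the $-1$. For $w$ large enough that $a\sqrt{\log w}\ge 2$, we have $a\sqrt{\log w}-1\ge \tfrac12 a\sqrt{\log w}$. This gives
\[
T\ \ge\ 2^{-26}\cdot \tfrac12\, a\sqrt{\log w}\ =\ \frac{a}{2^{27}}\sqrt{\log w},
\]
which is exactly the factor $2^{-27}$ in the statement.

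The argument therefore uses only finitely many largeness thresholds on $w$, each depending on $a$ and $c$. Taking $w$ at least their maximum yields the claim ``for all sufficiently large $w$''. Nothing here is a real obstacle; the one step to handle carefully is the $+1$ in $R+1$, which is why we compare $1/n$ with $2^{-a\sqrt{\log w}}$ and use that $n=w^c$ grows faster than $2^{a\sqrt{\log w}}$.
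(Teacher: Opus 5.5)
Your proposal is correct and takes the same route as the paper, whose proof just invokes Theorem~\ref{thm:main} ``for sufficiently large $w$''. You supply the omitted details: the hypothesis checks $w^{8c}\le 2^w$, $w\ge 2^{70}$ and $R<n$, and the comparison $1/n\le 2^{-a\sqrt{\log w}}$, which gives $\log(n/(R+1))\ge a\sqrt{\log w}-1$ and hence the halved constant $2^{-27}$. One aside should be dropped: replacing $R$ by its integer part is not a valid option, since that would claim a stronger bound than the theorem gives. The theorem already allows real $R$, and monotonicity is needed only to substitute the upper bound on $R$.
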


Domingues~\cite[Theorem~3]{D26} gives a rank index at the redundancy and time
scale in Corollary~\ref{cor:scale}. His reduction to a full fully indexable
dictionary incurs an additional $O(n\log w)$ bits~\cite[Lemma~2]{D26}. Thus
our lower bound matches the index's time coordinate, but it does not assert
that a full dictionary at the matching point already exists.

Two August 2026 results address adjacent compression models.
Blelloch et al.~\cite{BHKLZ26a} give constant-time dynamic entropy-encoded
arrays, while Blelloch et al.~\cite{BHKLZ26b} determine tradeoffs relative to
gap entropy for difference-encoded membership dictionaries. Neither supplies a
full dynamic predecessor structure at $\log\binom Un+o(n)$ bits in the
fusion-node regime.

\paragraph{The proof repair.}
The LLYZ inner game conditions on an event $W$ saying that two executions are
consistent. It then bounds variable messages using the unconditional
hypotheses $\E|S_A|,\E|S_B|\le mt$. Conditioning need not preserve these
bounds. We instead define $W$ to include the pointwise caps
$|S_A|,|S_B|\le mt$. The outer game accepts an alternative execution only if
it obeys the same cap. Markov's inequality shows that few true segments are
lost, while the inner message lengths become valid pointwise after
conditioning. This is the only structural change to the communication game.
Our statement concerns a gap in the written conditional message-length
estimate, not a counterexample to the published polynomial-universe theorem.

\paragraph{The universe extension.}
Set
\begin{equation}\label{eq:gamma}
  \gamma:=\frac{\log U-\log n}{6\log U}.
\end{equation}
For $n^8\le U$, $\gamma\ge7/48$. The four binomial estimates in the
inner game depend on the universe ratio through
\begin{equation}\label{eq:identity}
  \frac12\log(U/n)=3\gamma\log U.
\end{equation}
No step requires the ratio $\log U/\log n$ to be bounded above.

\paragraph{Scope.}
The result applies to exact deterministic and Las Vegas dictionaries. It does
not cover Monte Carlo dictionaries with erroneous answers. Static lookup
tables independent of the maintained set and update history may be shared for
free. Las Vegas randomness may be freshly sampled on every invocation; no
random-tape cursor or operation counter is charged or stored. Every persistent
mutable bit counts toward the redundancy.

\paragraph{Machine check.}
The complete fixed-layout model, fresh-random indexed execution, hard
distribution, communication bounds, nested-forest accounting, deterministic
and Las Vegas lower bounds, canonical strict predecessor reduction, and stated
corollaries have been checked by the Lean~4 kernel. The promoted declarations
use no custom axioms. Section~\ref{sec:formal} gives the declaration map and
replay command.

\paragraph{Proof architecture.}
Distribution~1 supplies a symmetric sequence of delete-insert windows. The
capped inner game shows that two sufficiently different insertion sets are
exponentially unlikely to produce compatible executions while both obeying a
pointwise probe budget. The outer game turns that statement into a dichotomy:
each window either spends many probes or contributes many cross-segment cell
reuses. A single padded hierarchy assigns every selected reuse to the unique
lowest common ancestor of consecutive occurrences of the same cell. Summing
over the hierarchy converts the local dichotomy into the global logarithmic
lower bound. The rest of the paper proves these four steps in that order.

\section{Model}\label{sec:model}

All logarithms are base two and $w=\lceil\log U\rceil$.

\begin{definition}[Fixed-layout packed cell probes]\label{def:model}
At capacity $n$, persistent memory has a public fixed finite list of cells.
Each cell has positive width at most $w$ bits, and the total width of the
layout is an integer $B$. A probe reads or writes one cell. Computation is free.

Every memory string has one position-independent represented set
$\operatorname{Rep}(C)\subseteq[U]$. Invocation position may select a fixed
random stream after conditioning, but it never changes the interpretation of
an unchanged memory string.

All persistent mutable bits, including seeds, pointers, and allocator state,
count toward $B$. Transient scratch can be free only if it begins each
operation in a public state and carries no information to the next operation;
its accesses still count as probes. A static table independent of the set and
operation history can be shared for free.
\end{definition}

A randomized operation may draw fresh private bits adaptively. After fixing
one local random stream for each global invocation position, its persistent
behavior is deterministic but need not be stationary: the same operation from
the same memory may have different traces at two positions. The invocation
position selects the fixed stream only in the analysis. It is not stored in
memory and is not available as algorithmic advice.

The model contains ordinary $w$-bit word memory and final physical
virtual-memory/spill representations~\cite[Sections~5.7 and~6]{KLZ26}. KLZ's
internal VMs and adapters are logical objects; their word accesses are
translated until the complete persistent state is one physical root prefix.
The lower bound simulates those final physical accesses, not the internal
translation steps. Therefore every nested length, spill, chunk map, and adapter
descriptor that survives between operations is already part of the claimed
physical space. An abstract VM interface that supplies content-dependent
metadata externally for free is not a physical word-RAM representation and is
not covered.

\begin{lemma}[Physical-prefix compilation]\label{lem:physicalization}
Suppose the complete persistent state of a word-RAM dictionary is one physical
prefix of at most $M_{\max}$ $w$-bit words plus $O(1)$ root spills, and its
total stored contents---including every nested-adapter length, spill descriptor,
chunk map, and address-translation parameter---use at most $S$ bits. Then the
dictionary embeds in a fixed packed layout of
\[
  S+\left\lceil\log(M_{\max}+1)\right\rceil+O(1)
\]
bits. Every final physical word/spill probe and every prefix resize is simulated
with $O(1)$ packed-cell probes, provided
$\lceil\log(M_{\max}+1)\rceil=O(w)$.
\end{lemma}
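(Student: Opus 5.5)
The plan is to build the fixed packed layout explicitly and then check the space and probe budgets. The layout has three parts. The first is one dedicated length cell of width $\ell:=\lceil\log(M_{\max}+1)\rceil$ holding the current prefix length $M\in\{0,\dots,M_{\max}\}$. The second is $O(1)$ cells holding the root spills. The third is a packed region of total width $S$ that stores the remaining contents. Since $\ell=O(w)$ by hypothesis, the length field splits into $O(1)$ cells of width at most $w$. Every cell therefore has width at most $w$, and the total width is $S+\ell+O(1)$, as required. I would define $\operatorname{Rep}$ on a packed memory string as follows: read $M$, reconstruct the physical prefix and spills, and apply the original dictionary's interpretation. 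This is position-independent because the original interpretation is.

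The key step is fitting a variable-length prefix into a fixed-width region without losing the budget. Because $M$ is stored explicitly, the physical prefix of $M$ words occupies the first $Mw$ bits of the region in the natural order. I read the hypothesis "total stored contents use at most $S$ bits" as bounding the live contents, so $Mw$ plus the spill widths is at most $S$. Word $i$ of the prefix lies at bit offset $iw$. If I cut the region into width-$w$ cells aligned at multiples of $w$ (with one final partial cell), then word $i$ is exactly one cell, so a probe to word $i$ costs one packed probe. Spills may have non-word widths; each sits in its own fixed cells of width at most $w$, and a spill of up to $O(w)$ bits uses $O(1)$ cells. Bits beyond position $Mw$ are garbage ignored by $\operatorname{Rep}$, so the live contents need no canonical form.

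For the simulation, I would translate each physical operation of the original structure. A word read or write at address $i<M$ becomes one probe of the aligned cell. A root-spill access becomes $O(1)$ probes. A prefix resize from $M$ to $M'$ becomes a rewrite of the length cell in $O(1)$ probes: growing does not require zeroing, since the original word-RAM treats newly allocated words as written before being read, or else I charge one write per word initialised. Reading $M$ is also needed for bounds checks, which costs $O(1)$ extra probes per operation. Nested adapter metadata needs no special handling, since by hypothesis it already lives inside the prefix or the spills and is counted in $S$. Transient scratch is left as in Definition~\ref{def:model}.

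The main obstacle is the resize step under the fixed-layout requirement. One issue is whether shrinking and regrowing can expose stale words that the original machine assumes are zero. The fix I would try first is to argue that the physical RAM semantics of \cite{KLZ26} never read beyond the written length. If that fails, I would keep a single $O(\log M_{\max})$-bit high-water mark alongside $M$, which stays within the $O(1)$ slack, and zero lazily on first touch. Each zeroing is charged to the write that created the word, so resizes cost $O(1)$ amortized rather than worst-case. I would state that weakening if needed.
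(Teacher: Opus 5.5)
Your layout and probe translation are essentially the paper's: a $\lceil\log(M_{\max}+1)\rceil$-bit length field split into $O(1)$ cells, one cell per physical word, fixed spill slots, resize as a length-field update, and nested metadata already counted among the contents. The real difference is how you treat words beyond the current length.

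The paper keeps a canonical, injective state map, with zero padding up to the public maximum. It pays for this with an explicit resize contract:
\begin{enumerate}
  \item the length changes by at most one;
  \item spill migrations stay inside root-spill cells and preserve their values;
  \item a newly allocated word is written before it carries information;
  \item each release performs one charged zeroing write on the vacated slot.
\end{enumerate}
You instead let $\operatorname{Rep}$ ignore stale bits beyond $Mw$. Then even a multi-word resize is a single length-cell rewrite, and release needs no zeroing. This is admissible in Definition~\ref{def:model}, which only asks for a well-defined position-independent $\operatorname{Rep}$ and already charges the stale bits to $B$.

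Your route buys generality in resize size. The paper's route buys a genuine injective embedding (the form its formal adapter uses) and keeps every vacated slot in a known state. You still need the same write-before-read condition on allocation that the paper builds into its contract. Given that condition, your high-water-mark fallback is unnecessary. It also would not prove the lemma as stated: it gives only amortized resize cost, and an $O(\log M_{\max})$-bit mark is not within the $O(1)$ slack, since it adds a second copy of the $\lceil\log(M_{\max}+1)\rceil$ term.

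The space count as you wrote it is off. You reserve a region of width $S$ for the prefix plus separate cells for the root spills, yet you read $S$ as already bounding $Mw$ plus the spill widths. This double-counts the spills. $O(1)$ cells of width up to $w$ cost $O(w)$ bits, not $O(1)$, so your layout has width $S+\lceil\log(M_{\max}+1)\rceil+O(w)$ rather than the claimed bound.

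This is not harmless downstream. In Theorem~\ref{thm:fusion} with $c\le1$, an extra $\Theta(w)$ bits is not $o(n)$. The repair is the paper's arrangement: $M_{\max}$ word cells followed by fixed spill slots, with the word region and spill slots together bounded by $S+O(1)$ bits.
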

\begin{proof}
The state map writes $M$ in a fixed
$\lceil\log(M_{\max}+1)\rceil$-bit field, followed by the $M$ physical words,
the $O(1)$ root spills, and zero padding to the public maximum. This map is
injective because $M$ and the complete physical contents are recovered
exactly. All nested metadata is already among those contents; it is not an
extra input to the simulator.

A final physical word access at address $i\le M$ reads or writes the
corresponding fixed word cell. Root spills use their fixed public spill slots
after the maximum word region. Before allocation or release, the charged
program reads the length cell and its bounded root-spill migration trace.
The resize contract records whether the length is stable, increases by one, or
decreases by one. It also requires every migration address to be a root-spill
cell and preserves its stored value across the transition. On allocation, the
physical transition writes the new word before the word carries information.
On release, the charged body writes any moved spill and the resize contract
requires zero padding in every vacated word slot. These writes are part of the
charged body, not free allocator state. Updating $M$ costs one additional cell probe. The
length field uses $O(1)$ packed cells under the displayed premise. Thus every
final physical operation has constant-factor probe overhead and the displayed
space bound. KLZ's Lemma~6.1 is an explicit example that stores all logical VM
lengths and chunk mappings inside such a final physical prefix.
\end{proof}

If the physical representation uses
$\log\binom Un+\rho$ bits, then $S=\log\binom Un+\rho+O(1)$ and
$\log M_{\max}=O(\log S)=O(w)$. At $U=2^w$ and $n=w^c$,
$S=\Theta(nw)$ and $O(\log S)=o(n)$ for every fixed $c>0$. Hence
$\rho=o(n)$ remains $o(n)$ after compilation.

A dictionary maintains $S\subseteq[U]$, $|S|\le n$, under exact
$\operatorname{Insert}$, $\operatorname{Delete}$, and membership
$\operatorname{Query}$. Its redundancy is
\begin{equation}\label{eq:redundancy}
  R:=B-\log\binom Un\ge0.
\end{equation}
An arbitrary $B$-bit string in the layout need not be reachable. We use one
such mixed string in the inner game, but never execute the original data
structure from it.

\section{Hard distribution}\label{sec:distribution}

We use Distribution~1 of LLYZ~\cite{LLYZ23}. Choose a uniform $n$-subset
$K_0\subseteq[U]$ and insert its keys in increasing order. Then, for
$i=1,\ldots,n$:
\begin{enumerate}
  \item choose $d_i$ uniformly from the keys of $K_0$ not previously removed;
  \item execute $\operatorname{Query}(d_i)$ and
        $\operatorname{Delete}(d_i)$;
  \item choose $a_i$ uniformly outside both $K_0$ and the current set, and
        execute $\operatorname{Insert}(a_i)$.
\end{enumerate}
The query-delete-insert triple is one \emph{meta-operation}. The sequence has
$4n$ primitive operations: $n$ initializing inserts and $3n$ operations in
the meta-phase.

At the start of any window of $m$ meta-operations, write $K$ for the current
$n$-set, $D$ for the keys deleted in the window, and $A$ for the inserted
keys. The symmetry facts used below are:
\begin{enumerate}
  \item $K$ is a uniform $n$-subset of $[U]$;
  \item conditioned on $K$, $D$ is a uniform $m$-subset of $K$, with a
        uniform order;
  \item conditioned on $K$, $A$ is a uniform $m$-subset of $[U]\setminus K$,
        with a uniform order; and
  \item $(K\setminus D)\cup A$ is a uniform $n$-subset.
\end{enumerate}
For the first two claims, hide the age labels of the current keys. The
surviving original keys form a uniform subset of the current set, and the
marginal deletion set is uniform. For insertions, the removed original keys
form a uniform hidden forbidden subset outside the current set. Every
candidate $A$ has the same number of disjoint hidden forbidden supersets.
Deleting uniformly from all currently present keys would be a different
distribution and would not imply these facts.

The complete joint law first samples $K$, then the uniform ordered deletion
set, and then the ordered insertion sets with the stated conditional laws.
The public relative permutations are independent of these semantic sets. No
independence is assumed for the concrete starting memory
$C_{\mathrm{bef}}$: it may be any reachable $B$-bit memory representing $K$,
and any information it carries is charged when the mixed $B$-bit state is sent.

\section{Tree accounting and the outer lemma}\label{sec:outer}

Partition an interval of $M$ consecutive meta-operations into $\lambda$
segments of $m$ operations, where $M=\lambda m$. Assign a later probe of a
cell to the lowest tree node whose distinct child segments contain that probe
and the preceding probe of the same cell. Let $\operatorname{cost}_u$ count
probes assigned to node $u$, and let $\operatorname{probe}_u$ count every
probe during its interval. A probe is assigned at most once; intervals at one
level are disjoint.

We use a public-coin Bloomier-style separator inspired by Chazelle, Kilian,
Rubinfeld, and Talwar~\cite{CKRT04}. An independent public seed $Z$ is a uniform
ordering of all Boolean predicates on the finite address universe. For disjoint
sets $P,N$, Alice sends the self-delimiting index of the first predicate that
is true on $P$ and false on $N$; Bob recovers the predicate from $(Z,\text{index})$.
Every seed contains a separator, and averaging over $Z$ gives expected length
at most $2(|P|+|N|)+4$ bits, independent of the address-universe size. The seed
is independent of the hard instance and shared for free. In the regimes below,
$|P|,|N|\le mt$ and $mt\ge2$, so this cost is at most $6mt$.

\begin{lemma}[Capped outer lemma]\label{lem:outer}
Let $u$ span $M=\lambda m$ meta-operations and let $t>0$, with $m\ge4$ and
$mt\ge2$. Suppose
\begin{align}
  2^{64}&\le\lambda\le\gamma\log U-5, \label{outer:a}\\
  M\log\lambda&\ge100(R+1), \label{outer:b}\\
  R+7m+8mt&\le\tfrac12\gamma m\log U, \label{outer:c}\\
  n^{1-\gamma/2}&\le m<n/2. \label{outer:d}
\end{align}
Then
\[
  \E[\operatorname{cost}_u]\ge\frac{\gamma}{100}M
  \quad\text{or}\quad
  \E[\operatorname{probe}_u]\ge\frac{Mt}{16}.
\]
\end{lemma}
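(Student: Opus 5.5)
We argue by contradiction with an encoding/communication argument: assume $\E[\operatorname{cost}_u]<\gamma M/100$ and $\E[\operatorname{probe}_u]<Mt/16$. From these we build a one-way protocol in which Alice, who knows the whole window, transmits the insertion set of a randomly chosen segment to Bob, who knows everything except that set. The protocol will be too cheap for the entropy of that set, which gives the contradiction. The window $u$ is split into its $\lambda$ child segments of $m$ meta-operations each. Fix a uniformly random segment index $j\in[\lambda]$. By averaging, the expected number of probes in segment $j$ is below $mt/16$. By Markov's inequality, the probability that segment $j$ exceeds the pointwise cap $mt$ is at most $1/16$. These are the rejected true segments, charged as in the proof repair. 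Likewise, the expected number of cross-segment reuses incident to segment $j$ is below $2\gamma m/100$. Here we count probes in $j$ whose previous access to the same cell was in an earlier segment of $u$, together with cells written in $j$ and later read in a later segment of $u$. Every such reuse is charged to some node at or below $u$. The factor $2$ absorbs the two endpoints.

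The protocol works as follows. Alice and Bob share the prefix before $u$, the public permutations, and the public separator seed $Z$. Alice sends the memory contents of the cells involved in cross-segment reuses with segment $j$ (addresses plus contents), at cost $O(w)$ per reuse. She also sends the mixed $B$-bit state: the memory before segment $j$, with the cells touched by $j$ overwritten by their values after $j$. This costs $B=\log\binom Un+R$ bits. In addition, she sends the separator of the probe sets $S_A$ and $S_B$, at cost at most $6mt$ by the Bloomier bound. Bob then runs the capped inner game. He enumerates candidate insertion sets $A'$ and keeps those whose simulated execution is consistent with the mixed state and separator and also obeys the cap $|S_{A'}|\le mt$.

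We then invoke the capped inner game. Its pointwise caps make the message bounds valid after conditioning on consistency. It shows that every candidate $A'$ far from $A$ survives with probability at most $2^{-\Omega(\gamma m\log U)}$, via the four binomial estimates and the identity~\eqref{eq:identity}. Counting the bits, the information Bob needs about $A$ is $\log\binom{U-n}{m}\approx m\log(U/n)=6\gamma m\log U$. The protocol spends $R+7m+8mt$ bits of overhead, which is at most $\tfrac12\gamma m\log U$ by~\eqref{outer:c}. It also spends $O(\gamma m/100\cdot w)$ bits on reuses, which is a small fraction of $\gamma m\log U$. The deleted set $D$ is recovered by Bob from the query answers, at $O(m)$ extra bits. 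Condition~\eqref{outer:d} keeps $m\ge n^{1-\gamma/2}$, so the binomial approximations hold, and $m<n/2$, so the symmetry facts of Section~\ref{sec:distribution} apply.

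The remaining role of~\eqref{outer:a} and~\eqref{outer:b} is to amortize the $B$-bit state over segments. Sending $B$ bits directly is unaffordable, so instead we randomize over $j$ and compare against the whole window. Bob learns all $\lambda$ insertion sets from the initial state together with the per-segment messages. The $R$ slack is then charged once against the $M\log\lambda\ge100(R+1)$ gain, which comes from the uniform choice of ordering and position information. The upper bound $\lambda\le\gamma\log U-5$ keeps the index costs $\log\lambda$ per operation below $\gamma\log U$.

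The main obstacle is this amortization step. We have to make the total message length across all $\lambda$ segments strictly beat the entropy $\lambda\log\binom{U-n}{m}$ plus the contribution of $D$ by $\Omega(M\log\lambda)$. At the same time, every per-segment overhead has to stay within the budget~\eqref{outer:c} after conditioning on the cap events. The accepted fraction, at least $15/16$ by Markov, must also be tracked so the conditional entropy loss stays at $O(1)$ per segment.
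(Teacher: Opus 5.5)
There is a genuine gap. The amortization you defer as the ``main obstacle'' is the whole content of this lemma, and the per-segment protocol you do describe cannot be assembled into it.

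\begin{itemize}
\item Sending the mixed $B$-bit state spends $\logbinom Un+R$ bits to convey one set worth only $\logbinom{U-n}{m}$ bits. That trade pays off only inside Lemma~\ref{lem:inner}, where Bob also learns $K$, $D$, and $Q$, and what comes out is the probability bound $U^{-\gamma m}$, not a contradiction.
\item Nor can that bound be used in your setup. If Bob does not already know the window's insertions, the candidates for $A^{[j]}$ are all $m$-subsets of $[U]\setminus K$. A union bound then fails, because $\binom{U-n}{m}U^{-\gamma m}\ge2^{m(\frac13\log U-1)}$ when $\gamma\le1/6$ and $m\le U^{1/2}$.
\item Paying $O(w)$ bits per cross-segment reuse does not amortize either. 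The negated hypothesis only caps reuses at about $\gamma M/100$, which is $\Theta(\gamma Mw)$ bits. The only available gain is $O(M\log\lambda)$, and $\gamma\log U>\lambda\gg\log\lambda$.
\item Also, $D$ is not $O(m)$ bits in the inner game. It costs $\logbinom{n+m}{m}\le\frac\gamma2m\log U+3m$, and that is where $m\ge n^{1-\gamma/2}$ is used.
\end{itemize}

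The missing idea is the paper's outer game.

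\begin{itemize}
\item Make public the start state, the deletion sequence, the relative permutations, and the \emph{unordered} union $\cA$ of all $M$ insertions. Then Alice transmits only the partition $(A^{[1]},\ldots,A^{[\lambda]})$, whose entropy is $\log M!-\lambda\log m!\approx M\log\lambda$.
\item Alice sends $C_{\mathrm{end}}$ once, at expected cost $R+1$. This works because the final key set is a function of the public input $P$, so $H(C_{\mathrm{end}}\mid P)\le R$.
\item Bob decodes segments left to right. When $|S_A^{[i]}|\le mt$ and $S_A^{[i]}$ meets at most $\gamma m/2$ bad cells, he accepts exactly those $Q\subseteq\cA_{\mathrm{rest}}$ whose run from the known $C_{\mathrm{bef}}$ probes at most $mt$ cells and disagrees with $C_{\mathrm{end}}$ on at most $\gamma m/2$ of them.
\item The true set passes, because any cell it probes that changes later is bad. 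So no reuse contents are ever transmitted.
\item An accepted far $Q$ is capped-consistent with $A^{[i]}$. The inner lemma plus a union bound over at most $2^M$ candidates then bounds the probability that any far $Q$ is accepted by $2^{m(\lambda-\gamma\log U)}\le2^{-5m}$. This, not an index cost, is the role of $\lambda\le\gamma\log U-5$.
\item When only near sets qualify, ranking inside the neighbor family saves at least $(m/4)\log\lambda$ bits. Markov's inequality, the bad-cell bound, and the $1/8$ far-acceptance bound together give more than $\lambda/4$ such segments in expectation.
\item The $R+1$ bits for $C_{\mathrm{end}}$ must then pay for a saving of $(M/16)\log\lambda-\lambda$, which contradicts~\eqref{outer:b}.
\end{itemize}
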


\subsection{Outer communication game}

Assume that both conclusions of Lemma~\ref{lem:outer} fail. Call a cell
\emph{bad} if it is probed in at least two segments of $u$, and write
$S_{\mathrm{bad}}$ for all such cells. Let $S_A^{[i]}$ be the cells probed
by the true execution of segment $i$. A cell appearing in $s\ge2$ segment
probe sets contributes at least $s-1\ge s/2$ assigned probes. Hence
\begin{equation}\label{eq:bad}
  \frac12\E\left[\sum_i
  |S_A^{[i]}\cap S_{\mathrm{bad}}|\right]
  \le\E[\operatorname{cost}_u]<\frac{\gamma M}{100}.
\end{equation}

Both players know the starting state $C_{\mathrm{st}}$, the unordered set
$\cA$ of all $M$ insertions in $u$, the deletion sequence, and one relative
permutation $\pi_i\in S_m$ for each segment. The permutation specifies the
order after that segment's unknown $m$-set is decoded; it reveals no key
identity or segment membership. Alice additionally knows the partition
$(A^{[1]},\ldots,A^{[\lambda]})$ of $\cA$ and must send it to Bob.

Alice first encodes the final state $C_{\mathrm{end}}$ conditioned on the
public input. The final key set $T$ is separately determined by
$C_{\mathrm{end}}$ and by the public input, and it is uniform among
$n$-subsets. Since
$H(C_{\mathrm{end}})\le B=\logbinom Un+R$,
if $P$ denotes the public input and $T$ the final key set, then
\[
  I(C_{\mathrm{end}};P)\ge I(T;P)=H(T)=\logbinom Un.
\]
Consequently,
\begin{equation}\label{eq:endstate}
  H(C_{\mathrm{end}}\mid P)
  =H(C_{\mathrm{end}})-I(C_{\mathrm{end}};P)\le R.
\end{equation}
A prefix code has expected length at most $R+1$.

Bob processes segments from left to right. Before segment $i$, previously
decoded parts determine its start state $C_{\mathrm{bef}}$. Let
\[
  \F_{\mathrm{all}}^{[i]}
  :=\{Q\subseteq\cA_{\mathrm{rest}}:|Q|=m\}.
\]
Alice defines
\begin{equation}\label{eq:xdef}
  X_i:=\mathbf1\left[
    |S_A^{[i]}\cap S_{\mathrm{bad}}|\le\gamma m/2
    \ \text{and}\ |S_A^{[i]}|\le mt
  \right].
\end{equation}
If $X_i=0$, she sends the rank of $A^{[i]}$ in
$\F_{\mathrm{all}}^{[i]}$. If $X_i=1$, Bob tests every
$Q\in\F_{\mathrm{all}}^{[i]}$: execute the segment from
$C_{\mathrm{bef}}$, obtaining probe set $S_Q$ and ending string $C_Q$, and
accept exactly when
\[
  |S_Q|\le mt,
  \qquad
  |\{z\in S_Q:C_Q(z)\ne C_{\mathrm{end}}(z)\}|\le\gamma m/2.
\]
The true set passes. A true-run cell altered after segment $i$ must be probed
in another segment and is therefore bad.

Let $\F_{\mathrm{qual}}^{[i]}$ be the accepted family and
\[
  \F_{\mathrm{nbr}}^{[i]}
  :=\{Q:|A^{[i]}\cap Q|\ge m/2\},
  \qquad
  Y_i:=\mathbf1[\F_{\mathrm{qual}}^{[i]}
                  \subseteq\F_{\mathrm{nbr}}^{[i]}].
\]
The proof charges an ideal adaptive rank cost, not a separately rounded code
at every stage. Conditional on the cap bit, stage $i$ contributes
$\log|\F_{\mathrm{all}}^{[i]}|$ or
$\log|\F_{\mathrm{qual}}^{[i]}|$. The product of the corresponding inverse
family sizes satisfies Kraft's inequality. We use its entropy converse
directly, without constructing or rounding a literal rank transcript. The cap
vector itself costs exactly $\lambda$ bits. Writing
$c_i$ for the ideal rank cost and charging its cap bit gives
\begin{equation}\label{eq:segment-message}
 c_i+1\le\log|\F_{\mathrm{all}}^{[i]}|
 -X_iY_i\bigl(\log|\F_{\mathrm{all}}^{[i]}|
                -\log|\F_{\mathrm{nbr}}^{[i]}|\bigr)+1.
\end{equation}

\subsection{Three estimates}

\begin{claim}[Many capped true segments]\label{cl:many-x}
\[
  \E\left[\sum_{i\le\lambda/2}X_i\right]
  \ge\frac{159}{400}\lambda>\frac\lambda4.
\]
\end{claim}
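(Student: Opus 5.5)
The plan is to count the segments among the first $\lambda/2$ on which $X_i=0$. We work under the standing assumption of this subsection, namely that both conclusions of Lemma~\ref{lem:outer} fail:
\[
\E[\operatorname{cost}_u]<\frac{\gamma M}{100},
\qquad
\E[\operatorname{probe}_u]<\frac{Mt}{16}.
\]
By \eqref{eq:xdef}, $X_i=0$ only if at least one of two things happens. Either the segment sees too many bad cells, $|S_A^{[i]}\cap S_{\mathrm{bad}}|>\gamma m/2$, or it exceeds its probe cap, $|S_A^{[i]}|>mt$. A union bound gives
\[
\sum_{i\le\lambda/2}(1-X_i)\le N_{\mathrm{bad}}+N_{\mathrm{cap}},
\]
where $N_{\mathrm{bad}}$ and $N_{\mathrm{cap}}$ count the segments $i\le\lambda/2$ violating the first and the second condition, respectively. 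Each count will be controlled by a pointwise Markov-type estimate, followed by taking expectations.

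\emph{Bad-cell term.} Pointwise, every segment counted in $N_{\mathrm{bad}}$ contributes more than $\gamma m/2$ to $\sum_i|S_A^{[i]}\cap S_{\mathrm{bad}}|$, so
\[
N_{\mathrm{bad}}\le\frac{2}{\gamma m}\sum_i|S_A^{[i]}\cap S_{\mathrm{bad}}|.
\]
Inequality \eqref{eq:bad} bounds the expectation of that sum by $2\E[\operatorname{cost}_u]<\gamma M/50=\gamma\lambda m/50$. Hence
\[
\E[N_{\mathrm{bad}}]<\frac{2}{\gamma m}\cdot\frac{\gamma\lambda m}{50}=\frac{\lambda}{25}.
\]
Here we use $\gamma>0$, which holds since $\gamma\ge7/48$.

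\emph{Cap term.} The set $S_A^{[i]}$ consists of distinct cells probed during segment $i$, so $|S_A^{[i]}|$ is at most the number of probes in that segment. Segments are disjoint subintervals of $u$, so $\sum_i|S_A^{[i]}|\le\operatorname{probe}_u$ pointwise. Every segment counted in $N_{\mathrm{cap}}$ contributes more than $mt$, so
\[
N_{\mathrm{cap}}\le\frac{\operatorname{probe}_u}{mt}.
\]
Taking expectations,
\[
\E[N_{\mathrm{cap}}]<\frac{Mt/16}{mt}=\frac{\lambda}{16}.
\]

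Combining the two terms,
\[
\E\Bigl[\sum_{i\le\lambda/2}X_i\Bigr]\ge\frac\lambda2-\frac\lambda{25}-\frac\lambda{16}=\frac{200-16-25}{400}\lambda=\frac{159}{400}\lambda,
\]
and $159/400>100/400=1/4$.

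The only delicate point is bookkeeping rather than a genuine obstacle. The sum runs over $\lambda/2$ segments, so if $\lambda$ is odd one reads the range as $i\le\lceil\lambda/2\rceil$; the count of available segments is then at least $\lambda/2$, and the same bound holds. Note also that the two Markov bounds are applied to the whole family of $\lambda$ segments and then restricted to the first half; this only loosens them. No conditioning enters here: both expectations are unconditional over the hard distribution (and the fresh randomness), so the argument does not touch the conditioning issue that the capped event $W$ repairs elsewhere.
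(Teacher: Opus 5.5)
Your proposal is correct and follows the paper's proof. You take a union bound over the two ways $X_i$ can fail, apply Markov's inequality to \eqref{eq:bad} to get fewer than $\lambda/25$ expected bad-cell failures, apply Markov to $\operatorname{probe}_u$ to get fewer than $\lambda/16$ expected cap failures, and subtract both from the first $\lambda/2$ segments; your remark about odd $\lambda$ is a harmless extra bit of care the paper omits, since $\lambda=2^{64}$ is even in the application.
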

\begin{proof}
Equation~\eqref{eq:bad} bounds the expected number of segments failing the
bad-cell cap by $\lambda/25$. Also,
\[
  \sum_i\Prb[|S_A^{[i]}|>mt]
  \le\frac{\E[\operatorname{probe}_u]}{mt}<\frac\lambda{16}.
\]
Subtracting both failure counts from the first $\lambda/2$ segments gives
$\lambda/2-\lambda/25-\lambda/16=159\lambda/400$.
\end{proof}

\begin{claim}[Far qualified sets are rare]\label{cl:far}
For every $i\le\lambda/2$,
\[
  \Prb[Y_i=0\ \text{and}\ X_i=1]\le\frac18.
\]
\end{claim}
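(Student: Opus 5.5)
The plan is to reduce Claim~\ref{cl:far} to the capped inner communication game: if the event $\{Y_i=0,\ X_i=1\}$ were likely, it would give a protocol that transmits $A^{[i]}$ too cheaply. Fix $i\le\lambda/2$ and condition on everything public before segment $i$. This means the start memory $C_{\mathrm{bef}}$ with key set $K$, the deletion sequence, the permutation $\pi_i$, and the remaining pool $\cA_{\mathrm{rest}}$. Because $i\le\lambda/2$, at least $\lambda m/2$ insertions of $u$ remain in the pool. By the symmetry facts of Section~\ref{sec:distribution}, $A^{[i]}$ is then a uniform $m$-subset of $[U]\setminus K$, and in particular of the pool. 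On the bad event there is a qualified $Q$ with $|A^{[i]}\cap Q|<m/2$. Both the true run on $A:=A^{[i]}$ and the run on $B:=Q$ start from $C_{\mathrm{bef}}$ and obey the cap $|S_A|,|S_B|\le mt$. Each differs from $C_{\mathrm{end}}$ on at most $\gamma m/2$ probed cells: for the true run this holds because only bad cells can change afterwards and $X_i=1$, and for $Q$ it is the acceptance rule. So the two runs disagree on at most $\gamma m$ cells of $S_A\cup S_B$. This is precisely the capped consistency event $W$ of the inner game, for two insertion sets that overlap in fewer than $m/2$ keys.

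The next step is a one-way encoding argument that shows $\Prb[W\text{ with such a far }B]\le 1/8$. Alice knows $A$ and the pool, and sends the following:
\begin{enumerate}
\item the far set $B$ by its rank among $m$-subsets of the pool, costing $\log\binom{\lambda m}{m}\le m\log(e\lambda)$;
\item the separator of $S_A\setminus S_B$ versus $S_B\setminus S_A$, costing $6mt$ bits;
\item the $\le\gamma m$ disagreeing addresses and their contents, costing $O(\gamma m)+\gamma m w$;
\item the mixed $B$-bit string (true cells on $S_A$, the $B$-run's cells elsewhere), from which Bob simulates the $A$-run and reads off $A$ through query answers and the $\pi_i$-ordered inserts, at cost about $R$ plus $O(m)$ for the resolution bits.
\end{enumerate}
Bob simulates the run on $B$, which is public given $B$. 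Using the separator and the patched cells, he reconstructs the $A$-run's probes without learning them in advance, and recovers $A$ up to the $\ge m/2$ keys of $A\setminus B$. Those are encoded inside the mixed string.

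The entropy comparison then goes as follows. The conditional entropy of $A$ is $\log\binom{U-n}{m}\approx m\log(U/m)$. The message is at most $R+7m+8mt+\gamma m\log U+m\log\lambda$. Using \eqref{eq:identity}, \eqref{outer:c}, \eqref{outer:a} and $m\ge n^{1-\gamma/2}$, the deficit is at least $\tfrac12\gamma m\log U$. So on the event of probability $p$ the code must save this many bits. The standard converse bounds the mass of any set coded with $\ell$ fewer bits than entropy by $2^{-\ell+O(1)}$, with a mixture bit charging $1$ and the failure branch coded by a plain rank. This gives $p\le 2^{-\Omega(\gamma m\log U)+1}\le1/8$.

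The main obstacle is step~4. I need the mixed state to carry about $m\log(U/n)$ bits of information about $A\setminus B$ beyond what $C_{\mathrm{end}}$ and the public input provide. Otherwise the bound is vacuous, since $A$ is already determined by the final memory. The honest source of the saving is the four binomial estimates of the inner game together with \eqref{eq:identity}. I would first try to invoke the (capped) inner lemma as a black box with $W$ defined pointwise. Its conclusion $\Prb[W]\le 2^{-\gamma m\log U/2}$ would then be union-bounded over all at most $2^{m\log(e\lambda)}$ far candidates $Q$. Condition~\eqref{outer:a}, $\lambda\le\gamma\log U-5$, makes the union bound lose at most $m\log(e\gamma\log U)\ll\tfrac12\gamma m\log U$, which leaves ample room for the $1/8$.
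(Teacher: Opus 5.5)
Your closing fallback has the same shape as the paper's proof. On $\{X_i=1,\,Y_i=0\}$, the true block $A^{[i]}$ and some accepted $Q$ with $|A^{[i]}\cap Q|<m/2$ are capped-consistent. So Lemma~\ref{lem:inner} plus a union bound over $\F_{\mathrm{all}}^{[i]}$ finishes the claim, with \eqref{outer:a} absorbing the count. Your sharper count $\binom Mm\le(e\lambda)^m$ even tolerates your weaker per-pair bound $2^{-\gamma m\log U/2}$; the lemma actually gives $U^{-\gamma m}$.

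Two smaller repairs come first. The triangle inequality through $C_{\mathrm{end}}$ controls disagreements only on $S_A\cap S_Q$, not on the union. On $S_A\setminus S_Q$ the $Q$-run leaves $C_{\mathrm{bef}}$, which $C_{\mathrm{end}}$ need not match. The intersection is all Definition~\ref{def:consistency} asks for, so this slip is harmless. Separately, discard the direct encoding of your second and third paragraphs:
\begin{itemize}
\item Once you condition on $\cA_{\mathrm{rest}}$, the block $A^{[i]}$ carries only $\log\binom{|\cA_{\mathrm{rest}}|}{m}\le m\log(e\lambda)$ bits, not $\log\binom{U-n}{m}$.
\item Sending $Q$ by its rank in the pool already costs about that much.
\item The $B$-bit mixed string costs $\log\binom Un+R$. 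This nets to ``about $R$'' only if the entropy of a random $K$ is credited, and that credit is lost once $C_{\mathrm{bef}}$ is fixed.
\end{itemize}

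The genuine gap is the step that licenses the black box. Lemma~\ref{lem:inner} bounds $W$ for a random pair under a specific product law: $K$ uniform, $D$ uniform in $K$, and $A,Q$ uniform $m$-subsets of $[U]\setminus K$ conditioned on $|A\cap Q|=g$. It says nothing about a fixed start state and a fixed pool. Yet that is exactly the conditioning your first paragraph imposes. Your phrase ``a uniform $m$-subset of $[U]\setminus K$, and in particular of the pool'' merges two different laws.

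Moreover, the candidates are subsets of a random pool, not fixed sets. So the union bound must be read as
\[
 \Prb[Y_i=0,\ X_i=1]\le\sum_{g<m/2}N_g\,\Prb[W\mid g],
 \qquad N_g=\binom mg\binom{|\cA_{\mathrm{rest}}|-m}{m-g},
\]
where the pair in $\Prb[W\mid g]$ is the true block together with a uniform pool candidate at overlap $g$. You may substitute the inner bound only after showing that this pair has exactly the inner lemma's law.

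That matching is the content of the paper's proof. Given $K$, the pool is uniform in $[U]\setminus K$ and $A^{[i]}$ is uniform in the pool. Every pair $(A,Q)$ with $|A\cap Q|=g$ lies in the same number $\binom{|[U]\setminus K|-(2m-g)}{|\cA_{\mathrm{rest}}|-(2m-g)}$ of pools. Hence pool-first sampling followed by a uniform candidate reproduces the inner pair law, with the independent relative permutations shared by both runs. Without this exchangeability step, and with the pool held fixed as you set it up, the invocation of Lemma~\ref{lem:inner} is unjustified.
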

\begin{proof}
If $X_i=1$ and a far $Q$ qualifies, then the true and alternative executions
are consistent on $S_A\cap S_Q$ and both probe at most $mt$ cells. The capped
inner lemma, Lemma~\ref{lem:inner}, bounds this event for every fixed
intersection size $g<m/2$ by $U^{-\gamma m}$.

The distribution match is exact. Conditional on $g$, every pair of $m$-sets
$A,Q\subseteq[U]\setminus K$ with $|A\cap Q|=g$ is contained in the same
number
\[
  \binom{|[U]\setminus K|-(2m-g)}
        {|\cA_{\mathrm{rest}}|-(2m-g)}
\]
of possible residual supersets. Thus sampling the residual set first and then
$Q$ produces the pair law in Lemma~\ref{lem:inner}. A union bound gives
\[
  |\F_{\mathrm{all}}^{[i]}|U^{-\gamma m}
  \le2^M U^{-\gamma m}
  =2^{m(\lambda-\gamma\log U)}
  \le2^{-5m}\le\frac18
\]
by~\eqref{outer:a}.
\end{proof}

\begin{claim}[Counting gap]\label{cl:gap}
For $i\le\lambda/2$,
\[
  \log|\F_{\mathrm{all}}^{[i]}|
  -\log|\F_{\mathrm{nbr}}^{[i]}|
  \ge\frac m4\log\lambda.
\]
\end{claim}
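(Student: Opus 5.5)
Write $N:=|\cA_{\mathrm{rest}}^{[i]}|$ for the number of insertions not yet decoded before segment $i$. Since segments $1,\ldots,i-1$ have been removed from $\cA$, we have $N=(\lambda-i+1)m$, and for $i\le\lambda/2$ this gives $N\ge\lambda m/2+m\ge \lambda m/2$. Then $|\F_{\mathrm{all}}^{[i]}|=\binom Nm$. For the neighbour family, every $Q\in\F_{\mathrm{nbr}}^{[i]}$ shares at least $m/2$ elements with the fixed $m$-set $A^{[i]}$. So $Q$ is determined by the intersection $A^{[i]}\cap Q$ (at most $2^m$ choices) together with its remaining at most $m/2$ elements drawn from $N$ keys. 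Hence
\[
  |\F_{\mathrm{nbr}}^{[i]}|\le 2^m\binom{N}{\lfloor m/2\rfloor}\cdot(\text{small factor}),
\]
or more cleanly $|\F_{\mathrm{nbr}}^{[i]}|\le\sum_{j\le m/2}\binom mj\binom{N}{j}\le 2^m\cdot\frac{m}{2}\binom{N}{\lfloor m/2\rfloor}$ once $N\ge 2m$. Because $\lambda\ge2^{64}$, the inequality $N\ge 2m$ holds with enormous slack.

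Next we compare the two binomials using the standard bounds $\binom Nm\ge (N/m)^m$ and $\binom N{j}\le (eN/j)^j$. This gives
\[
  \log\binom Nm-\log\binom N{m/2}\ge m\log\frac Nm-\frac m2\log\frac{2eN}{m}
  =\frac m2\log\frac Nm-\frac m2\log(2e).
\]
With $N/m\ge\lambda/2$, the gap is at least $\frac m2\log\lambda-\frac m2\log(4e)$. We then subtract the neighbour overhead $m+\log m$ and handle the rounding of $m/2$: when $m$ is odd, use $\lceil m/2\rceil$ in the intersection condition, which costs at most an extra $\log N$ term. That term is itself at most $O(\log\lambda+\log m)$ and can be absorbed. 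Collecting terms gives
\[
  \log|\F_{\mathrm{all}}^{[i]}|-\log|\F_{\mathrm{nbr}}^{[i]}|\ge\frac m2\log\lambda-O(m).
\]

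The final step is to check that $\frac m4\log\lambda$ dominates the $O(m)$ loss, and this is the place where the constants must actually be verified. Since $\log\lambda\ge64$, we have $\frac m4\log\lambda\ge16m$. The explicit losses total $\frac m2\log(4e)+m+\log m+O(\log N)/1$, roughly $2.8m+\log m$ plus rounding terms, all comfortably below $16m$ given $m\ge4$. The one care point is the $\log N$ rounding term. I would avoid it altogether by bounding the sum directly: $\sum_{j\le \lfloor m/2\rfloor}\binom mj\binom Nj\le 2^m\binom{N}{\lfloor m/2\rfloor}\cdot\frac{1}{1-2m/N}$, using the ratio of consecutive binomials. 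Then $\lfloor m/2\rfloor\le m/2$ only helps the upper bound, and no $\log N$ term appears.
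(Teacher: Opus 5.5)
Your proof is correct and follows essentially the paper's route. You lower-bound $|\F_{\mathrm{all}}^{[i]}|$ using the at least $\lambda m/2$ remaining insertions, and upper-bound $|\F_{\mathrm{nbr}}^{[i]}|$ by $2^m$ choices of the intersection with $A^{[i]}$ times a binomial in the at most $m/2$ remaining elements. You then absorb the resulting $O(m)$ loss using $\frac m4\log\lambda\ge16m$.

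Two differences are only cosmetic. You compare both binomials at the exact common $N$, whereas the paper bounds the neighbour family separately via $N\le M=\lambda m$. Your odd-$m$ rounding detour is unnecessary, since $|A^{[i]}\cap Q|\ge m/2$ already forces $|Q\setminus A^{[i]}|\le\lfloor m/2\rfloor$.
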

\begin{proof}
At least $M/2$ insertions remain, so
\[
  \log|\F_{\mathrm{all}}^{[i]}|
  \ge\log\binom{M/2}{m}\ge m(\log\lambda-1).
\]
Counting neighbors by intersection size gives
\[
  \log|\F_{\mathrm{nbr}}^{[i]}|
  \le4m+\frac m2\log\lambda.
\]
The difference is at least $(m/4)\log\lambda$ because
$\log\lambda\ge64$.
\end{proof}

Claims~\ref{cl:many-x} and~\ref{cl:far} imply
\begin{equation}\label{eq:xy}
 \sum_{i\le\lambda/2}\E[X_iY_i]
 \ge\frac{159}{400}\lambda-\frac\lambda2\cdot\frac18
 =\frac{134}{400}\lambda>\frac\lambda4.
\end{equation}
Summing~\eqref{eq:segment-message}, applying Claim~\ref{cl:gap}, and
telescoping the partition counts yields the total ideal rank cost plus the
cap-vector bits:
\begin{equation}\label{eq:outer-upper}
 \lambda+\sum_i\E c_i
 \le\lambda+\log M!-\lambda\log m!-\frac{M}{16}\log\lambda.
\end{equation}
The partition has entropy $\log M!-\lambda\log m!$. Accounting for the
expected $R+1$ bits used for $C_{\mathrm{end}}$, the Kraft inequality gives
the matching lower bound directly, with no per-stage ceiling:
\begin{equation}\label{eq:outer-lower}
 \lambda+\sum_i\E c_i\ge\log M!-\lambda\log m!-(R+1).
\end{equation}
Therefore
\[
 R+1\ge\frac{M}{16}\log\lambda-\lambda
 \ge\frac{3M}{64}\log\lambda
 >\frac{M}{100}\log\lambda,
\]
using $\lambda\le M$ and $\log\lambda\ge64$. This contradicts
\eqref{outer:b} and proves Lemma~\ref{lem:outer}, once the inner lemma is
established.

\section{Capped inner game}\label{sec:inner}

Fix one segment. Let $m<n/2$ and $g<m/2$. Sample a uniform $n$-set
$K\subseteq[U]$, a uniform $m$-subset $D\subseteq K$ with order $\sigma$,
and uniform $m$-subsets $A,Q\subseteq[U]\setminus K$ conditioned on
$|A\cap Q|=g$. A common permutation $\pi$ determines both insertion orders.
Starting from the random state $C_{\mathrm{bef}}$ representing $K$, let
$C_A,C_Q$ be the two ending strings and $S_A,S_Q$ their probe sets.

\begin{definition}[Capped consistency]\label{def:consistency}
The two executions are capped-consistent when
\begin{enumerate}
  \item $C_A(S_A\cap S_Q)$ and $C_Q(S_A\cap S_Q)$ differ on at most
        $\gamma m$ cells; and
  \item $|S_A|\le mt$ and $|S_Q|\le mt$.
\end{enumerate}
\end{definition}

\begin{lemma}[Capped inner lemma]\label{lem:inner}
Suppose
\begin{align*}
  4&\le m,\\
  m&\le n\le U^{1/2},\\
  m&\ge n^{1-\gamma/2},\\
  mt&\ge2,\\
  \gamma\log U&\ge25,\\
  R+7m+8mt&\le\tfrac12\gamma m\log U.
\end{align*}
Then, for every $g<m/2$,
\[
 \Prb[A,Q\text{ are capped-consistent}\mid|A\cap Q|=g]
 \le U^{-\gamma m}.
\]
\end{lemma}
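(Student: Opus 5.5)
We prove Lemma~\ref{lem:inner} by an encoding argument in the style of the LLYZ inner game. Let $W$ be the capped-consistency event of Definition~\ref{def:consistency} and write $p:=\Prb[W\mid |A\cap Q|=g]$. We design a one-way message that, given $W$, lets a decoder who knows $K$, $D$, $\sigma$, $\pi$ and the public seed $Z$ recover the pair $(A,Q)$, or at least $A\cup Q$ together with the split. We then compare its expected length with the conditional entropy of the pair. Since $W$ already contains the pointwise caps $|S_A|,|S_Q|\le mt$, every message-length bound below holds pointwise on $W$. This is exactly the repair described in the introduction: we never need $\E[\,\cdot\mid W]$ of a probe-set size.

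The encoder works as follows. It builds a \emph{mixed} $B$-bit string $C_{\mathrm{mix}}$. On $S_A\setminus S_Q$ it takes the values of $C_A$. On $S_Q\setminus S_A$ it takes the values of $C_Q$. On $S_A\cap S_Q$ it takes the values of $C_A$, and it separately lists the at most $\gamma m$ cells where $C_Q$ differs, together with their contents, at cost about $\gamma m(\log(2mt)+w)$ bits. Everywhere else it takes the values of $C_{\mathrm{bef}}$.

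The encoder also sends the public-coin separator on $(S_A\setminus S_Q,\,S_Q\setminus S_A)$, which costs at most $6mt$ bits by the Bloomier bound of Section~\ref{sec:outer}. It also sends the $g$ common keys, or their positions, within a budget absorbed into the $7m$ term. The $C_{\mathrm{mix}}$ string itself costs $B=\log\binom Un+R$ bits.

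The decoder recovers the sets by simulation. It tries every candidate insertion sequence and runs it from $C_{\mathrm{mix}}$. Before a probe, cells on the $A$-side of the separator are treated as holding their $C_{\mathrm{bef}}$ values. This requires $C_{\mathrm{bef}}$, which the decoder cannot reconstruct directly. We handle it the standard LLYZ way: the decoder simulates \emph{backward-free}. It learns $K$ from public input, and it treats $C_{\mathrm{mix}}$ as encoding the final sets $(K\setminus D)\cup A$ and $(K\setminus D)\cup Q$ via the two restricted views.

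Concretely, the two views are $C_{\mathrm{mix}}$ restricted to $S_A$ plus $C_{\mathrm{bef}}$ elsewhere, which represents $(K\setminus D)\cup A$, and the analogous view for $Q$ using the separator and the diff list. Each view determines its final set through $\operatorname{Rep}$. So the message determines both $A$ and $Q$ given $K$ and $D$, and $C_{\mathrm{bef}}$ is sent only implicitly. The position-independence clause of Definition~\ref{def:model} is what makes $\operatorname{Rep}$ of each view well defined.

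The entropy accounting then gives the bound. Conditionally on $K,D$ and $g$, the pair $(A,Q)$ has entropy $\log\binom{U-n}{m}+\log\binom{m}{g}+\log\binom{U-n-m}{m-g}$. On $W$, the message has length $\log\binom Un+R+7m+8mt+\gamma m w$. When $W$ fails we fall back to a trivial code, so the decoding succeeds with probability $p$. The standard inequality $\log(1/p)\le(\text{message length})-(\text{entropy})$ then applies, and the task reduces to four binomial estimates. We must show
\[
\log\binom{U-n}{m}+\log\binom{U-n-m}{m-g}-\log\binom Un\;\ge\;\tfrac32\gamma m\log U+\text{slack}.
\]
Here we use $m\ge n^{1-\gamma/2}$, $n\le U^{1/2}$, $g<m/2$ and identity~\eqref{eq:identity}, so that each $\log(U/n)$-type term contributes $\tfrac12\log(U/n)=3\gamma\log U$ per key for at least $m/2$ fresh keys. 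After subtracting $R+7m+8mt\le\tfrac12\gamma m\log U$ and the diff cost (at most $\gamma m\log U$ up to lower-order terms, using $\gamma\log U\ge25$), what remains is at least $\gamma m\log U$. This gives $p\le U^{-\gamma m}$.

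The main obstacle is making the mixed-state decoding rigorous without ever running the original structure from $C_{\mathrm{mix}}$, as the model section warns. The delicate step is the binomial bookkeeping that plays $\log\binom Un$ against the two fresh-key terms, and this is where the lower bound $m\ge n^{1-\gamma/2}$ enters. I would first try to verify that this combination yields a surplus of $3\gamma m\log U\cdot\tfrac12$ per side. If it falls short, I would adjust the diff-list encoding so that it costs only $\gamma m(\log U+O(\log(mt)))$.
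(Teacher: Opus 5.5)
Your proposal has a genuine gap: the message cannot be decoded, and the entropy comparison cannot close.

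\textbf{Decoding.} Your $C_{\mathrm{mix}}$ stores final values on all of $S_A\cup S_Q$, so $C_{\mathrm{bef}}$ survives only where neither run reads. The decoder can then replay neither execution, since a cell-probe replay needs the starting contents of every cell it reads. It also never learns $S_A$ or $S_Q$: the separator splits addresses but does not list them. Nor can it form your two views. The $A$-view needs $C_{\mathrm{bef}}$ on $S_Q\setminus S_A$, the $Q$-view needs it on $S_A\setminus S_Q$, and neither is transmitted; ``backward-free'' simulation does not supply them.

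The paper's mixed string is deliberately asymmetric: it equals $C_{\mathrm{bef}}$ on $S_Q$ and $C_A$ off $S_Q$. Off $S_Q$, the only cells with $C_A\ne C_{\mathrm{bef}}$ lie in $S_A\setminus S_Q$, which the separator superset $S_Q^*$ excludes. So Bob's replay of the $Q$-run, confined to $S_Q^*$, reads only true starting contents and is a genuine run from a reachable state. This is what resolves the obstacle you flag. The replay yields $C_Q$ on $S_Q$. The intersection mask and diff list convert this to $C_A$ on $S_A\cap S_Q$, and $C_{\mathrm{mix}}$ already equals $C_A$ elsewhere. Bob therefore recovers all of $C_A$, hence $(K\setminus D)\cup A$.

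\textbf{Accounting.} With $K$ and $D$ public, you still pay $B\ge\log\binom Un$ bits for $C_{\mathrm{mix}}$, while $(A,Q)$ carries only about $2m\log U$ bits. The inequality you plan to verify, $\log\binom{U-n}{m}+\log\binom{U-n-m}{m-g}-\log\binom Un\ge\frac32\gamma m\log U$, is false for every $m<n/2$, which is where the lemma is applied. For $n>4m$ its left side is negative, because $\log\binom Un\ge\frac n2\log U$. For $n\le4m$ the left side is below $n\log(2e)<12m$, while $\frac32\gamma m\log U\ge\frac{75}{2}m$. No refinement of the diff list repairs a deficit of this order.

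The missing idea is that Bob must \emph{learn} $K$ and $D$ as well. Then the entropy $\log\binom Un$ of $K$ cancels the $B$-bit string up to $R$, and the remaining terms balance as follows.
\begin{itemize}
\item Alice sends $Q$ outright, at net cost $\log\binom Um-\log\binom{U-n}{m}\le m$.
\item Bob recovers $D$ by enumerating candidate deletion sets and rejecting any whose replay leaves $S_Q^*$ or whose query misses. Survivors lie in $K\cup Q$, so indexing $D$ costs $\log\binom{n+m}{m}\le\frac\gamma2 m\log U+3m$. This is where $m\ge n^{1-\gamma/2}$ enters.
\item Locating $A$ inside $(K\setminus D)\cup A$ costs $\log\binom nm$, which cancels the entropy of $D$ within $K$.
\end{itemize}
The net gain is $\log\binom{U-n-m}{m-g}\ge 3\gamma m\log U-m$ by~\eqref{eq:identity}. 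Subtracting $\frac\gamma2 m\log U$, $\gamma mw$, $R$, $8mt$ and $O(m)$ leaves at least $\gamma m\log U$.
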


\subsection{Protocol}

Fix arbitrary values of the public permutations $(\pi,\sigma)$, and let $Z$
be the independent uniform public separator seed described above. Before
conditioning on success, $(K,D,A,Q)$ retains the uniform product law below,
independently of $Z$. We prove the probability bound for every fixed
permutation pair and then average.

Let $W$ indicate capped consistency. Alice knows $C_{\mathrm{bef}}$ and the
ordered sequences; Bob knows the fixed $\pi$, $\sigma$, and public seed $Z$.
Conditional on $W=1$, Alice sends enough information for Bob to recover
$(A,Q,D,K)$.

\begin{enumerate}
  \item Send $Q$ in $\logbinom Um$ bits.
  \item Send the mixed $B$-bit string $C_{\mathrm{mix}}$: on $S_Q$ it equals
        $C_{\mathrm{bef}}$, and outside $S_Q$ it equals $C_A$.
  \item Send the public-seed separator index describing a superset $S_Q^*$ of
        $S_Q$ whose complement contains $S_A\setminus S_Q$. Conditional on the
        probe sets, its expectation over $Z$ is at most
        $4mt+4\le6mt$ bits.
  \item Recover $D$. Bob enumerates candidate $m$-sets, simulates the
        $Q$-execution using the decoded values on $S_Q^*$, and rejects a
        candidate if its next probe leaves $S_Q^*$ or a query misses. The true
        $D$ passes. Every survivor lies in $K\cup Q$, so Alice's index costs at
        most $\logbinom{n+m}{m}$ bits.
  \item Send $S_A\cap S_Q$ as a subset of the now-known $S_Q$, using at most
        $mt$ bits.
  \item Send the cells where $C_A$ and $C_Q$ differ on the intersection: an
        address mask of at most $mt$ bits and at most $\gamma mw$ content
        bits.
  \item Bob reconstructs $C_A$: use the corrected $C_Q$ values on
        $S_A\cap S_Q$ and $C_{\mathrm{mix}}$ everywhere else. The exact
        dictionary state determines $(K\setminus D)\cup A$. Alice identifies
        $A$ inside this $n$-set using $\logbinom nm$ bits, after which Bob
        recovers $K$.
\end{enumerate}

The separator is sufficient for the partial simulation. It includes $S_Q$.
For an extra cell in $S_Q^*\setminus S_Q$, the separator excludes
$S_A\setminus S_Q$; hence either the cell is outside both probe sets, where
$C_{\mathrm{bef}}=C_A=C_{\mathrm{mix}}$, or it lies in the intersection
already included in $S_Q$. Thus Bob knows the required starting content on
all of $S_Q^*$. The true $D$ never probes outside $S_Q\subseteq S_Q^*$.

The three finite-set ranks are encoded jointly and introduce fewer than three
rounding bits in total; all masks have their displayed fixed lengths. Since
$m\ge4$, the reserve of $m$ bits covers every ceiling and delimiter. Averaging
the separator over the independent $Z$, the pointwise caps give
\begin{align}
\E_{z\sim Z}\!\left[
  H(M_{\mathrm{send}}\mid W=1,\pi,\sigma,Z=z)\right]\le{}&
 \logbinom Um+\bigl(\logbinom Un+R\bigr)+6mt
 +\logbinom{n+m}{m}\notag\\
&+mt+mt+\gamma mw+\logbinom nm+m.
\label{eq:send}
\end{align}
This remains valid after conditioning on $W=1$, which is the purpose of the
capped event.

\subsection{Entropy comparison}

For each fixed permutation pair, the learned tuple is a deterministic function
of the transcript and Bob's fixed side information. Averaging over $Z$ gives
\[
 \E_{z\sim Z}\!\left[
   H(M_{\mathrm{learn}}\mid W=1,\pi,\sigma,Z=z)\right]
 \le
 \E_{z\sim Z}\!\left[
   H(M_{\mathrm{send}}\mid W=1,\pi,\sigma,Z=z)\right]
\]
whenever the conditioning event has positive probability; if its probability
is zero, the desired bound is immediate.

For the fixed $(\pi,\sigma)$, before conditioning on $W$, each tuple
$(K,D,A,Q)$ has probability
\[
 \left[
  \binom Un\binom nm\binom{U-n}{m}
  \binom mg\binom{U-n-m}{m-g}
 \right]^{-1}.
\]
Since $Z$ is independent of both the tuple and $W$, conditioning on it changes
neither this law nor the learned-tuple entropy. Write
$p_{\pi,\sigma}:=\Prb[W=1\mid\pi,\sigma]$. Conditioning can increase
a tuple's probability by at most $1/p_{\pi,\sigma}$.
Therefore, after averaging over $Z$,
\begin{align}
\E_{z\sim Z}\!\left[
  H(M_{\mathrm{learn}}\mid W=1,\pi,\sigma,Z=z)\right]\ge{}&
 \logbinom Un+\logbinom nm+\logbinom{U-n}{m}
 +\logbinom mg\notag\\
&+\logbinom{U-n-m}{m-g}+\log p_{\pi,\sigma}.
\label{eq:learn}
\end{align}
Combining~\eqref{eq:send} and~\eqref{eq:learn} gives
\begin{align}
\log\frac1{p_{\pi,\sigma}}\ge{}&
 \logbinom mg+\logbinom{U-n-m}{m-g}
 -\left(\logbinom Um-\logbinom{U-n}{m}\right)\notag\\
&-\logbinom{n+m}{m}-R-8mt
 -\gamma mw-m.
\label{eq:entropy-core}
\end{align}

We need four estimates.
\begin{enumerate}
  \item Since $m,n\le U^{1/2}$ and $\log U\ge2^{70}$,
  \begin{equation}\label{eq:bin-a}
    \logbinom Um-\logbinom{U-n}{m}\le m
  \end{equation}
  Expand the ratio as a product. Since $U-n-m\ge U/2$,
  \[
    \sum_{j<m}\log\left(1+\frac{n}{U-n-j}\right)
    \le\frac{4mn}{U}\le m.
  \]

  \item Since $U-n-m\ge U/2$, $m-g>m/2$, and $m\le n$,
  \begin{align}
    \logbinom{U-n-m}{m-g}
    &\ge\frac m2\log\frac Un-m\notag\\
    &=3\gamma m\log U-m,
    \label{eq:bin-b}
  \end{align}
  where the equality is~\eqref{eq:identity}.

  \item From $m\ge n^{1-\gamma/2}$,
  \begin{equation}\label{eq:bin-c}
    \logbinom{n+m}{m}
    \le m\left(3+\log\frac nm\right)
    \le\frac\gamma2m\log U+3m.
  \end{equation}

  \item $\logbinom mg\ge0$.
\end{enumerate}
Since $w=\lceil\log U\rceil\le\log U+1$ and $\gamma\le1/6$,
substitution into~\eqref{eq:entropy-core} yields
\begin{align*}
 \log\frac1{p_{\pi,\sigma}}
 &\ge\frac32\gamma m\log U
       -\bigl(7m+R+8mt\bigr)\\
 &\ge\gamma m\log U.
\end{align*}
Thus $p_{\pi,\sigma}\le U^{-\gamma m}$ for every fixed public permutation
pair. Averaging over $(\pi,\sigma)$ gives $\Prb[W=1]\le U^{-\gamma m}$,
proving Lemma~\ref{lem:inner} and Lemma~\ref{lem:outer}.

\section{Proof of the main theorem}\label{sec:main-proof}

Fix $\lambda=2^{64}$ and assume the explicit threshold $\log U\ge2^{70}$.
It implies
\[
 \lambda\le\frac7{48}\log U-5,
 \qquad \gamma\log U\ge\frac7{48}2^{70}>2^{64}+5.
\]
Set
\begin{equation}\label{eq:t-m0}
 t:=\frac{\gamma\log U}{32},
 \qquad \Rhat:=R+1,
 \qquad m_0:=\max\left\{\lceil\Rhat\rceil,
       \left\lceil n^{1-\gamma/2}\right\rceil\right\}.
\end{equation}
If $\Rhat\ge n/2$, the logarithm in Theorem~\ref{thm:main} is $O(1)$ and
the set-changing probe floor below proves the result. Otherwise continue with
the forest construction.

For
\[
 L:=\log_\lambda\left(\left\lfloor n/m_0\right\rfloor\right),
\]
where the right side is the integer logarithm, and every $j=1,\ldots,L$,
partition the meta-operation sequence into full
nodes of length
\[
 M_j:=m_0\lambda^j,
\]
each with $\lambda$ children of length $m_0\lambda^{j-1}$. The accounting
object is one padded complete hierarchy of depth $L+1$. Out-of-range base
leaves are empty; partial nodes are traversed but not charged, preserving
smaller full nodes inside larger-level leftovers. Only the first $L$ levels
enter the outer-lemma dichotomy. Each selected noninitial probe occurrence is
assigned to the unique lowest common ancestor of it and the immediately
preceding occurrence of the same cell.

Write $x:=\log(n/\Rhat)$. The depth estimate below gives
$L\ge\gamma x/256$ whenever $x\ge8192$. Thus $L=0$ can occur only in the
small-logarithm case handled by the set-changing probe floor below. For the
forest argument assume $L\ge1$; then $m_0\le n/\lambda<n/2$.

Each full node satisfies Lemma~\ref{lem:outer}. Conditions
\eqref{outer:a}, \eqref{outer:b}, and \eqref{outer:d} follow from the
definitions. The growth condition, $\gamma\le1/6$, and $m<n/2$ imply
$m\ge4$ whenever such a node exists. Moreover
$t=\gamma\log U/32>2^{59}$, so $mt\ge2$. For a child of length $m\ge m_0$,
$R\le m$ and
\[
 8mt=\frac\gamma4m\log U.
\]
Since $R+7m\le8m\le(\gamma/4)m\log U$, condition~\eqref{outer:c} also
holds.

At any fixed level, at least half the full nodes satisfy the probe conclusion
or at least half satisfy the cost conclusion. If some level is probe-heavy,
$\lfloor n/M_j\rfloor\ge n/(2M_j)$, so the meta-phase uses at least
\[
 \frac12\frac{n}{2M_j}\frac{M_jt}{16}=\frac{nt}{64}
\]
probes. Dividing by all $4n$ primitive operations gives $t/256$. Since
$x\le\log n$, $\log U\ge8\log n$, and $\gamma\ge7/48$,
\begin{equation}\label{eq:probe-constant}
 \frac{t}{256}
 =\frac{\gamma\log U}{32\cdot256}
 \ge\frac{7x}{192\cdot256}
 \ge 2^{-25}x.
\end{equation}

Otherwise every level is cost-heavy. Each level contributes at least
\[
 \frac12\frac{n}{2M_j}\frac{\gamma M_j}{100}
 =\frac{\gamma n}{400}
\]
assigned probes. Assignments across levels are disjoint, so the average over
the full sequence is at least $\gamma L/1600$.

Suppose $x\ge8192$. Rounding changes $m_0$ by at most a factor two. If
$\Rhat\ge n^{1-\gamma/2}$, then
\[
 L\ge x/64-2\ge x/128.
\]
If $\Rhat<n^{1-\gamma/2}$, then $x\le\log n$ and
\[
 L\ge\gamma\log n/128-2\ge\gamma x/256.
\]
The weaker common estimate $L\ge\gamma x/256$ therefore holds in both
cases. In the cost-heavy branch,
\begin{equation}\label{eq:cost-constant}
 \frac{\gamma L}{1600}
 \ge\frac{\gamma^2x}{256\cdot1600}
 \ge\frac{(7/48)^2x}{256\cdot1600}
 \ge2^{-25}x.
\end{equation}
Moreover, $x\ge8192$ implies
$\log(1+n/\Rhat)\le2x$. Equations~\eqref{eq:probe-constant}
and~\eqref{eq:cost-constant} therefore give the claimed $2^{-26}$
coefficient in either forest branch.

It remains to consider $x<8192$. Then
$\log(1+n/\Rhat)<8193$. The $n$ initialization inserts, $n$ deletions, and
$n$ replacement inserts all change the represented set and each must perform
a probe. Their $3n$ probes give average $3/4$ over the $4n$ operations, while
$2^{-26}\cdot8193\le3/4$. Hence in every case
\[
  2^{-26}\log\left(1+\frac n\Rhat\right)
  \le \frac{\E[\text{full-run probes}]}{4n}.
\]
This is Theorem~\ref{thm:main}.

\section{Consequences and randomization}\label{sec:consequences}

\begin{proof}[Proof of Theorem~\ref{thm:fusion}]
Canonical strict predecessor answers membership for $x\ne U-1$ because
\[
  x\in S\quad\Longleftrightarrow\quad\operatorname{pred}_{<}(x+1)=x.
\]
The formal reduction preserves the initial memory, represented set, reachable
states, and every finite-batch probe trace. It applies the main theorem on
$[U-1]$. At $U=2^w$, take
$n(w)=\lceil w^c\rceil$ for fixed real $c>0$; for positive integer $c$, take
$n(w)=w^c$ exactly. Then $n(w)^8\le U$ for sufficiently large $w$. A space
bound of
\[
  \left\lceil\log\binom Un\right\rceil+\rho(n),
  \qquad \rho(n)=o(n),
\]
first compiles by Lemma~\ref{lem:physicalization} with
$H=O(\log S)=o(n)$ additional bits. The strict-predecessor reduction then gives
membership redundancy
\[
  R_{<}\le \rho(n)+H+2=o(n).
\]
For an already fixed layout the explicit denominator is $R+2$; for a variable
physical prefix it is $\rho+H+O(1)$. In either case the lower bound diverges.
Theorem~\ref{thm:main} therefore gives $\omega(1)$ expected-amortized time.
A worst-case $O(1)$ structure would have $O(1)$ expected-amortized time, a
contradiction.
\end{proof}

\begin{proof}[Proof of Corollary~\ref{cor:linear}]
If time is at most a constant $T$, Theorem~\ref{thm:main} implies
\[
 R+1\ge n\cdot2^{-C_0T}=\Omega(n).
\]
\end{proof}

\begin{proof}[Proof of Corollary~\ref{cor:scale}]
For sufficiently large $w$, the stated redundancy and Theorem~\ref{thm:main}
give the explicit bound
\[
  \frac{a}{2^{27}}\sqrt{\log w}
  \le \text{expected-amortized cost}.
\]
\end{proof}

\begin{lemma}[Almost-sure schedule totalization]\label{lem:totalization}
Fix finite $U$ and a finite $B$-bit memory layout. A conventional zero-error
Las Vegas cell-probe machine that uses one fresh local random stream per
invocation and terminates almost surely on every valid finite history induces
an operation-indexed family of total deterministic dictionaries and a common
full-measure schedule set on which outputs, persistent transitions, probe
traces, and finite-history costs agree exactly with the source machine.
\end{lemma}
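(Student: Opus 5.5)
The plan is to build the deterministic family explicitly by fixing randomness per invocation index, and then to show that the bad event has measure zero. First I formalize the random source. The schedule space is $\Omega=(\{0,1\}^{\mathbb N})^{\mathbb N}$. Coordinate $k$ is the fresh local stream consumed by the $k$-th global invocation, and $\Omega$ carries the product of uniform Bernoulli measures. For a fixed schedule $\omega$ and invocation index $k$, the source machine's behavior on one operation $\mathrm{op}$ from memory $C$ is a deterministic cell-probe decision tree. Its internal branching on random bits is resolved by reading successive bits of $\omega_k$. I define the $k$-th deterministic dictionary $D_{k,\omega}$ to run this tree. Where the tree halts, $D_{k,\omega}$ copies the source's output, transition, and probe trace. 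Where it does not halt, I totalize it by a fixed default, say leave the memory unchanged, return a fixed answer, and use the empty trace. This makes every $D_{k,\omega}$ a total deterministic dictionary on the fixed $B$-bit layout, so it fits Definition~\ref{def:model}. The family is operation-indexed because the index $k$ selects the stream, exactly as in the model's remark that invocation position selects a fixed stream only in the analysis.

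Next I build the common full-measure set. Since $U$ and $B$ are finite, there are only countably many valid finite histories, namely finite sequences of operations on $[U]$ starting from the public initial memory. For a valid history $h=(\mathrm{op}_1,\dots,\mathrm{op}_r)$, let $E_h\subseteq\Omega$ be the set of schedules on which the source run of $h$ fails to terminate at some step. Almost-sure termination on every valid finite history gives $\Prb[E_h]=0$. I also handle correctness. Zero-error means that with probability one every terminating invocation answers correctly and preserves $\operatorname{Rep}$ consistently, so the failure set $F_h$ is null as well. I set $\Omega^*=\Omega\setminus\bigcup_h(E_h\cup F_h)$. This is a countable union of null sets, so $\Omega^*$ has full measure.

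Finally I check agreement by induction on the history length. On $\omega\in\Omega^*$, suppose the memory after $\mathrm{op}_1,\dots,\mathrm{op}_{k-1}$ agrees in both processes. Then invocation $k$ reads the same stream $\omega_k$ from the same memory, and the source halts. So $D_{k,\omega}$ follows the identical branch, and it produces the same output, persistent transition, and probe trace. Finite-history costs are sums of trace lengths, so they agree too. The default branch of the totalization is never reached on $\Omega^*$, so it cannot affect any of these quantities.

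The main obstacle is measurability and the common quantifier. Each history carries its own random memory states, and the set $\Omega^*$ must be independent of the history, which is why I need countability rather than a per-history argument. I must also verify that each $E_h$ is measurable. I would do this by writing the non-termination event at step $k$ as $\bigcap_N\{\text{the first } N \text{ bits of } \omega_k \text{ do not yet reach a leaf}\}$, given the finitely many possible prior memories. Each of these sets is a cylinder set, so the intersection is a Borel set.
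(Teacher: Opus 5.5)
\textbf{Gap in the totalization step.} Your outer architecture matches the paper's. You let position $k$ select the fixed stream $\omega_k$. You remove a countable union of null failure events over valid finite histories to get a common full-measure set. You then induct on history length. All of that is fine. The gap is in the totalization, which is the substance of this lemma.

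You define $D_{k,\omega}$ memory by memory: copy the source where its run from $C$ halts, and otherwise ``leave the memory unchanged, return a fixed answer, and use the empty trace.'' That is not an adaptive read--write program in the sense of Definition~\ref{def:model}. Whether the run from $C$ halts depends on cells not yet probed. So two memories that agree on every cell read so far can be assigned a nonempty trace and an empty trace respectively. This happens when they share a probe prefix and only one of them later enters a nonterminating loop, and no decision tree produces such a pair. Likewise, ``memory unchanged'' cannot be honored once the shared prefix has already written cells. Your claim that $D_{k,\omega}$ ``fits Definition~\ref{def:model}'' therefore fails, and the deterministic theorem cannot be applied to it.

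This is not a formality. The inner game executes the program from the unreachable string $C_{\mathrm{mix}}$, and Bob simulates it probe by probe knowing values only on $S_Q^*$. Both steps presuppose that each next action is determined by the values already read. Your remark that the default ``is never reached on $\Omega^*$'' covers only reachable memories, not these mixed ones.

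\textbf{The repair.} The paper's fix is to put the fallback at a node of the tree, not at a memory. For fixed $\omega\in\Omega^*$, a position, and an operation, collect the terminating source paths from the reachable memories at that position. There are finitely many, each finite, because the memory domain is finite. Merge them into one finite trie keyed by read values. Attach an arbitrary terminating leaf at every read value not used by those paths, and after any unrealized write continuation.

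The result is a genuine finite adaptive program. Every reachable memory follows its source path exactly, so your induction is unchanged. An unreachable memory such as $C_{\mathrm{mix}}$ either follows a reachable path on the cells it reads or exits at a fallback leaf after the probes already made. Either way its behavior depends only on probed values, and the replay argument uses only the first case.

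An equivalent variant keeps halting paths from all memories. It cuts each nonhalting path at the first node where it leaves the finite union of halting paths, which must happen at finite depth.
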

\begin{proof}
For each invocation position, operation, and reachable memory produced by a
finite valid history, termination and exact correctness hold on a
probability-one schedule set. There are countably many such tuples: positions
and histories are countable, while operations and memories are finite. Their
countable intersection is a common full-measure set $G$.

Fix a schedule in $G$, a position, and an operation. Collect the terminating
probe paths only from reachable memories at that position. This is a finite set
of finite paths because the memory domain is finite. Merge their shared
read-value prefixes into one finite interaction trie. At every read value not
used by a reachable path, attach an arbitrary terminating fallback leaf; do the
same after any unrealized write continuation. The resulting total program
agrees exactly with the source machine on every reachable memory. A mixed
unreachable memory used by the communication proof either follows the same
known-cell path as its reachable reference execution, or enters a fallback
branch; the replay lemma uses only the former case. Thus the extension
preserves every source trace needed by the proof without assuming source
termination on corrupt memories. On schedules outside $G$, choose arbitrary
total fallback trees solely to define the family; correctness is required only
almost everywhere and the deterministic theorem is never invoked there.
\end{proof}

For a Las Vegas structure, expose one fresh local stream at every global
primitive-operation position and apply Lemma~\ref{lem:totalization}. Every true
and counterfactual segment at one position uses the same fixed local stream,
while distinct invocations may have different traces. Fix the
input-independent schedule law $\mu$ before the hard instance, giving the
product experiment $\mathcal D_1\times\mu$. The indexed deterministic theorem
holds on $G$; monotonicity of the nonnegative Lebesgue integral preserves its
bound. Independence among schedule positions is unnecessary, although
independent fresh streams are a special case.

A Monte Carlo structure is not a distribution over always-correct
deterministic dictionaries. Our protocol uses exact answers to decode the set
and reject invalid deletion candidates, so per-query error is not covered.

\section{Formal verification and artifact}\label{sec:formal}

The Lean development formalizes the theorem in the operational model of
Definition~\ref{def:model}; it is not an axiomatized restatement of the paper.
In particular, the development defines the packed memory and adaptive
read--write programs, executes the sampled operation sequence, proves the
inner and outer counting bounds, constructs the padded occurrence forest, and
derives the scalar inequalities in Theorems~\ref{thm:main}
and~\ref{thm:fusion}. The canonical adapter module formalizes the charged
physical-prefix state map, mixed-width length field, nested metadata cells,
allocation and release traces, injectivity, and exact trace and probe-cost
simulation. It also formalizes the conventional source-machine totalizer,
finite fallback trie, common almost-everywhere schedule set, mixed-state
replay, exact finite-batch source/target equality, and the indexed-family
lower-bound lift, including almost-everywhere equality between source and
indexed expected-amortized costs. The physical word-RAM
translation premise remains the explicit scope condition in
Lemma~\ref{lem:physicalization}.

\Needspace*{8\baselineskip}
\begin{center}\footnotesize
\begin{tabular}{@{}>{\raggedright\arraybackslash}p{0.33\linewidth}
  >{\raggedright\arraybackslash}p{0.59\linewidth}@{}}
\toprule
Paper statement & Lean declaration \\
\midrule
Theorem~\ref{thm:main}, deterministic &
  \path{indexed_deterministic_expected_amortized_lower_bound} \\
Theorem~\ref{thm:main}, Las Vegas &
  \path{fresh_random_expected_amortized_lower_bound} \\
Theorem~\ref{thm:fusion}, deterministic &
  \path{freshTheorem2_strictDynamicPredecessor_impossibility_realC} \\
Theorem~\ref{thm:fusion}, Las Vegas &
  \path{freshTheorem2_strictDynamicPredecessor_impossibility_lasvegas_realC} \\
Physical-prefix compilation adapter &
  \path{PhysicalPrefixCanonical.run_charged_compile};\newline
  \path{PhysicalPrefixCanonical.probeCount_charged_compile};\newline
  \path{PhysicalPrefixCanonical.PhysicalDictionary.physical_prefix_expected_amortized_lower_bound} \\
Las Vegas source adapter &
  \path{PhysicalPrefixCanonical.Conventional.source_batch_execution_exact};\newline
  \path{PhysicalPrefixCanonical.Conventional.ae_source_batch_probeCount_eq};\newline
  \path{PhysicalPrefixCanonical.Conventional.source_expectedAmortizedCost_eq_indexed};\newline
  \path{PhysicalPrefixCanonical.Conventional.conventional_source_expected_amortized_lower_bound} \\
Corollary~\ref{cor:linear} &
  \path{corollary3_linear_redundancy} \\
Corollary~\ref{cor:scale} &
  \path{corollary4_sqrt_time} \\
\bottomrule
\end{tabular}
\end{center}

The companion artifact places the two proof trees under
\begin{center}\small
\path{LeanProject/SuccinctFusionNodes/}\\
\path{LeanProject/SuccinctFusionNodesFreshRandom/}.
\end{center}
The file
\path{CLAIMS_EVIDENCE.md} maps every load-bearing paper claim to its formal
declarations; \path{Audit.lean} and \path{CanonicalAudit.lean} check the
promoted endpoints; and \path{REPLAY.md} plus \path{CanonicalReplay.md}
record the environment and source identities. From the
packaged \path{formal/lean_project} directory, the complete replay is:

{\small
\begin{verbatim}
nice -n 10 lake -Kjobs=1 build \
  LeanProject
nice -n 10 lake env lean \
  LeanProject/SuccinctFusionNodesFreshRandom/Audit.lean
nice -n 10 lake env lean \
  LeanProject/SuccinctFusionNodes/CanonicalAudit.lean
\end{verbatim}
}

The pinned environment is Lean \texttt{4.32.2}. The Mathlib revision is
\begin{center}\small
\texttt{905b95818eb32af7874a58b427f50c1711a5e96c}.
\end{center}
The aggregate SHA-256 values of the frozen, fresh-random, and canonical
source-adapter Lean trees are, respectively,
\begin{center}\scriptsize
\texttt{e930f8d59a5f6588713ce9b67c5a50f3f8891e21df5d3eed29743f7a39555a55},\\
\texttt{18882fa9b9c71f94ecc3210a2dd85ec53c63439bef8dc392a39e3f519da81d22},\\
\texttt{2aa2cea19a4d08c711c102a1bb1186d97508553353f6b3acc3e079a097d72579}.
\end{center}
The fresh-random top target completed 8,730 jobs. A source scan finds no
\texttt{sorry}, \texttt{admit}, \path{native_decide}, or custom
\texttt{axiom}; the transitive foundations reported for every promoted
endpoint are exactly \texttt{propext}, \texttt{Classical.choice}, and
\texttt{Quot.sound}. The artifact also includes the manuscript source,
rendered PDF, deterministic package builders, and a manifest-checking replay
script. No dataset or specialized hardware is required.

\section{Context and open questions}\label{sec:context}

\Needspace*{9\baselineskip}
\begin{center}\small
\begin{tabular}{@{}p{0.24\linewidth}p{0.29\linewidth}p{0.39\linewidth}@{}}
\toprule
Work & Time in the fusion-node regime & Redundancy or scope \\
\midrule
P\v{a}tra\c{s}cu--Thorup~\cite{PT14} & worst-case $O(1)$ &
  not $o(n)$ bits \\
KLZ~\cite{KLZ26} & optimal amortized expected &
  $o(n)$ only in their polynomial-universe regime \\
Domingues~\cite{D26} & worst-case $O(1/\epsilon)$ &
  $O(nw^\epsilon)$ bits; smaller result is a rank index \\
LLYZ~\cite{LLYZ23} & lower bound &
  stated for $U=n^{1+\Theta(1)}$ \\
This paper & lower bound &
  $2^{-26}\log(1+n/(R+1))$ for $n^8\le U$ and $\log U\ge2^{70}$ \\
\bottomrule
\end{tabular}
\end{center}

The remaining questions include:
\begin{enumerate}
  \item Can a full dictionary achieve redundancy
        $n\cdot2^{-\Theta(\sqrt{\log w})}$ and time $O(\sqrt{\log w})$?
  \item Does a comparable lower bound hold for Monte Carlo dictionaries?
  \item Can the update-only LLYZ lower bound be extended to large universes?
  \item What is the exact constant-time frontier between $\Theta(n)$ and
        Domingues' $O(nw^\epsilon)$ redundancy?
\end{enumerate}

\appendix
\section{Finite coding and forest details}\label{app:details}

This appendix expands the proof joints that are compact in the main text. It
uses no additional assumptions.

\subsection{Adaptive Kraft coding in the outer game}

Fix the public outer context and a payload $y=(C_{\mathrm{end}},X)$, where
$X\in\{0,1\}^{\lambda}$ is the cap vector. Let $\mathcal P_y$ be the ordered
partitions compatible with $y$. After blocks $A^{[1]},\ldots,A^{[i-1]}$ have
been decoded, let $F_i(y,A^{[<i]})$ be the family used for the next rank. It is
either the complete remaining $m$-set family or the qualified family, and it
contains the true next block.

For $p\in\mathcal P_y$, define
\[
 q_y(p):=\prod_{i=1}^{\lambda}
   |F_i(y,A^{[<i]})|^{-1},
 \qquad
 c_y(p):=-\log q_y(p)=\sum_i\log|F_i(y,A^{[<i]})|.
\]
The recursive family tree gives
\begin{equation}\label{eq:kraft-fiber}
  \sum_{p\in\mathcal P_y}q_y(p)\le1.
\end{equation}
Indeed, at each node the first-block weights sum to one, and induction applies
to every child source. The log-sum inequality, or equivalently the entropy
converse to~\eqref{eq:kraft-fiber}, therefore gives
\[
  \frac1{|\mathcal P_y|}\sum_{p\in\mathcal P_y}c_y(p)
  \ge \log|\mathcal P_y|.
\]
Thus no integer ceiling is charged per stage. Averaging over payload fibers,
the cap vector contributes exactly $\lambda$ bits. The final-memory fiber
contributes at most $R$ bits on average by the argument below. This yields
Equation~\eqref{eq:outer-lower}; Equation~\eqref{eq:outer-upper} follows by
inserting the localized-family savings into the exact ideal cost $c_y(p)$.

\subsection{Endpoint fibers}

Let $\mathcal C_T$ be the full-capacity memory strings representing target set
$T$. Exact reachability makes every $n$-set attainable, while
\[
  \sum_T|\mathcal C_T|\le2^B.
\]
Since the endpoint target is uniform, concavity of the logarithm gives
\[
 \frac1{\binom Un}\sum_T\log|\mathcal C_T|
 \le \log\frac{2^B}{\binom Un}=R.
\]
A self-delimiting integer rank adds less than one bit, giving the paper's
$R+1$ reserve. This argument counts unreachable memory strings only in the
upper bound $2^B$; it never treats them as legal states.

\subsection{Residual pair law}

Fix $K$, intersection size $g$, and the residual insertion superset available
before segment $i$. For every pair of $m$-sets $A,Q\subseteq[U]\setminus K$
with $|A\cap Q|=g$, the union has size $2m-g$. The number of residual supersets
of the prescribed size containing that pair is
\[
  \binom{|[U]\setminus K|-(2m-g)}
        {|\mathcal A_{\mathrm{rest}}|-(2m-g)},
\]
which is independent of the pair. Hence residual-first sampling followed by a
uniform candidate produces exactly the inner game's conditional pair law.
The deletion and insertion relative permutations are sampled independently of
these semantic sets and are then fixed before the inner probability bound.

\subsection{Unique-LCA forest accounting}

Write the actual full probe trace as an ordered list of address occurrences.
For every nonfirst occurrence of an address, pair it with the immediately
preceding occurrence of that address. At one padded hierarchy node, assign the
later occurrence to the unique lowest common ancestor whose distinct children
contain the pair. If an address occurs in $s$ child supports, its consecutive
cross-child pairs contribute exactly $s-1$ units, which equals the node charge.

Each later occurrence has one immediate predecessor and one lowest common
ancestor, so it is charged at most once over the entire hierarchy. Padded empty
leaves cover out-of-range positions. A partial node receives no local charge,
but recursion continues into it, preserving every smaller full node in the
leftover interval. Therefore the sum of all selected full-node charges is at
most the number of noninitial occurrences in the one actual trace, and hence
at most its length. At a fixed level, the full nodes are disjoint aligned
windows; their local charges are exactly the repeated-window costs used by
Lemma~\ref{lem:outer}. These two facts give the all-level inequality used in
Section~\ref{sec:main-proof}.

\section*{Research-system disclosure}
This work was performed with material assistance from Beyond, the research
system operated by Nth Research Collective. The author reviewed the complete
proof, accepts responsibility for the claims, and made the final scientific
and editorial decisions.

\end{document}